\newtheorem{theorem}{Theorem}
\newtheorem{lemma}{Lemma}
\newtheorem{proposition}{Proposition}
\newcommand{\sign}{\ensuremath{\operatorname{sign}}}
\newcommand{\erf}{\ensuremath{\operatorname{erf}}}
\DeclarePairedDelimiter\floor{\lfloor}{\rfloor}
\newcommand{\mybinom}[3][0.8]{\scalebox{#1}{$\dbinom{#2}{#3}$}}
\xdef\f@size@small{\f@size}
\xdef\f@baselineskip@small{\f@baselineskip}
\xdef\f@size@normalsize{\f@size}
\xdef\f@baselineskip@normalsize{\f@baselineskip}
\newcommand{\smalltonormalsize}{%
  \fontsize
    {\fpeval{(\f@size@small+\f@size@normalsize)/2}}
    {\fpeval{(\f@baselineskip@small+\f@baselineskip@normalsize)/2}}%
  \selectfont
}
\xdef\f@size@footnotesize{\f@size}
\xdef\f@baselineskip@footnotesize{\f@baselineskip}
\xdef\f@size@small{\f@size}
\xdef\f@baselineskip@small{\f@baselineskip}
\newcommand{\footnotesizetosmall}{%
  \fontsize
    {\fpeval{(\f@size@footnotesize+\f@size@small)/2}}
    {\fpeval{(\f@baselineskip@footnotesize+\f@baselineskip@small)/2}}%
  \selectfont
}
\begin{document}

\title{On the Impact of Dynamic Beamforming on EMF Exposure and Network Coverage: A Stochastic Geometry Perspective}

\author{Quentin~Gontier,~\IEEEmembership{Graduate Student Member,~IEEE,} Charles~Wiame,~\IEEEmembership{Member,~IEEE,} 
Joe~Wiart,~\IEEEmembership{Senior Member,~IEEE,} François~Horlin,~\IEEEmembership{Member,~IEEE,} Christo~Tsigros, Claude~Oestges,~\IEEEmembership{Fellow,~IEEE,} and  Philippe~De~Doncker,~\IEEEmembership{Member,~IEEE}
\thanks{This work was supported by Innoviris under the Stochastic Geometry Modeling of Public Exposure to EMF (STOEMP-EMF)~grant.}
\thanks{Q. Gontier, F. Horlin and Ph. De Doncker are with Universit\'e Libre de Bruxelles, OPERA-WCG, Avenue Roosevelt 50 CP 165/81, 1050 Brussels, Belgium (quentin.gontier@ulb.be).} 
\thanks{C. Wiame is with NCRC Group, Massachusetts Institute of Technology, Cambridge, MA 02139 USA.} 
\thanks{J. Wiart is with Chaire C2M, LTCI, Télécom Paris, Institut Polytechnique de Paris, 91120 Palaiseau, France.} 
\thanks{C. Tsigros is with Department Technologies et Rayonnement, Brussels Environment, Belgium.} 
\thanks{C. Oestges is with ICTEAM Institute, Université Catholique de Louvain, Louvain-la-Neuve, Belgium.} 
}

\maketitle

\begin{abstract}
This paper introduces a new mathematical framework for dynamic beamforming-based cellular networks, grounded in stochastic geometry. The framework is used to study the electromagnetic field exposure (EMFE) of active and idle users as a function of the distance between them. A novel multi-cosine antenna pattern is introduced, offering more accurate modeling by incorporating both main and side lobes. Results show that the cumulative distribution functions of EMFE and coverage obtained with the multi-cosine pattern align closely with theoretical models, reducing error to less than 2\%, compared to a minimum of 8\% for other models. The marginal distribution of EMFE for each user type is mathematically derived. A unique contribution is the introduction of the SCAIU (\underline{S}patial \underline{C}DF for \underline{A}ctive and \underline{I}dle \underline{U}sers), a metric that ensures coverage for active users while limiting EMFE for idle users. Network performance is analyzed using these metrics across varying distances and antenna elements. The analysis reveals that, for the chosen network parameters, with 64 antenna elements, the impact on idle user EMFE becomes negligible beyond 60~m. However, to maintain active user SINR above 10 dB and idle user EMFE below -50~dBm at 2~m, more than 256 elements are required.
\end{abstract}

\begin{IEEEkeywords}
Coverage, dynamic beamforming, EMF exposure, Nakagami-$m$ fading, Poisson point process, stochastic geometry.
\end{IEEEkeywords}

\section{Introduction}

\IEEEPARstart{T}{he} rapid evolution of wireless communication technologies has sparked growing concerns regarding electromagnetic field exposure (EMFE). 
Entities such as the International Commission on Non-Ionizing Radiation Protection (ICNIRP) establish maximal EMFE thresholds based on conservative margins and literature review, specifying basic restrictions in terms of specific absorption rate (SAR) or incident power density (IPD) \cite{icnirp2020}. 

An important innovation introduced by the 5th generation of cellular networks is dynamic beamforming (DBF). Using multiple antennas at the base station (BS), DBF enables the formation of narrow dynamic beams. While this significantly enhances the signal-to-interference-plus-noise ratio (SINR), it results in higher EMFE for \textit{active users} (AUs) calling for the beam, compared to \textit{idle users} (IUs) who are not active in the network \cite{ANFR2019a}. Due to these differences, several authors and regulatory bodies advocate for the establishment of distinct EMFE limits for AUs and IUs, with the aim of creating "reduced EMFE areas" in sensitive locations such as hospitals, schools, or public buildings like train stations \cite{strinati2021wireless}. Investigating the EMFE experienced by IUs, especially its variation with distance from an AU \cite{IMEC_nl}, is crucial in this context. 

The EMFE experienced by IUs is influenced by contributions from close BSs serving adjacent AUs, but also from BSs located further away, with the dynamic beam's shape playing a pivotal role. This complexity underscores the interconnected nature of IU's EMFE, minimal SINR requirements, and maximum EMFE allowances for AUs, influencing network design strategies. To comprehensively analyze these factors, this paper explores global network performance by simultaneously examining EMFE and coverage for AUs, alongside EMFE for IUs through the SCAIU (\underline{S}patial \underline{C}DF for \underline{A}ctive and \underline{I}dle \underline{U}sers) metric.

The SCAIU metric involves random parameters, including the network topology, BS beam directions and propagation channel. Stochastic geometry (SG) emerges as a powerful tool for capturing this inherent randomness, modeling BSs as point processes (PPs) to formulate network performance in mathematically and computationally tractable integrals \cite{Lee2013}.

Motivated by these considerations, this paper aims to introduce a comprehensive mathematical framework for marginally and jointly evaluating EMFE for both AUs and IUs and the coverage of AUs, utilizing an antenna gain model that closely approximates real-world conditions.

\subsection{Related Works}
\label{ssec:related_works}

\paragraph{Exploring EMFE}
Numerous studies have evaluated EMFE in 5G and beyond networks, where the complexity of heterogeneous networks includes macro, small, and femto cells, higher frequencies than for previous generations of cellular networks, and active antenna arrays. Studies on 5G EMFE range from field measurements to simulation and mathematical models. Urban measurements under user calls and idle states using a specific protocol have been validated \cite{IMEC_protocol_paper}. Simulation studies by the French spectrum regulator ANFR have revealed increased EMFE for AUs and reduced EMFE for IUs following the deployment of 5G \cite{ANFR2021}. Comprehensive reviews on the current state of EMFE evaluation for 5G BSs are available in \cite{Elbasheir21} and \cite{patsouras_2023_8099834}, highlighting the reduction in EMFE levels through DBF.

Strategies to mitigate EMFE include optimization algorithms, smart power control schemes \cite{Ajibare21}, and reconfigurable intelligent surfaces (RIS) for creating low-EMFE zones while maintaining data rates \cite{Ibraiwish22}.

But challenges remain; in-situ measurements only provide insights at specific locations under particular conditions, while deterministic numerical evaluations struggle to efficiently capture all sources of randomness within the network.

\paragraph{DBF models in SG}
SG is pivotal for computing performance metrics in large networks with random parameters. BSs are often modeled as homogeneous Poisson point processes (PPPs) to balance accuracy and computational tractability. SG has extensively studied aspects like SINR and ergodic data rate \cite{Lee2013}, including DBF.

Antenna pattern models approximate the array factor of a Uniform Linear Array (ULA), which is mathematically intractable. The sectored or flat-top pattern, widely used despite its simplifications, aids in modeling SINR in multi-tier millimeter-wave (mmWave) networks and beam misalignment scenarios \cite{Kalamkar22}. The truncated cosine approximation of the main lobe offers closer simulation results to theoretical patterns, benefiting various applications \cite{yu2017, Nabil23}. The Gaussian approximation models the side lobes more realistically, which is crucial for studies on beam misalignment in mmWave networks \cite{Rebato19}. Novel approaches like cylindrical arrays explore interference in non-traditional configurations \cite{math10071156}.

\paragraph{EMFE assessment using SG}

SG has been instrumental in assessing and optimizing IPD for Wireless Power Transfer (WPT) systems and EMF-aware applications. In SG studies of DBF, the flat-top model supports energy harvesting and correlation analyses \cite{ECC} and is used for simultaneous wireless information and power transfer applications \cite{SWIPT_MIMO}. Additionally, the Gaussian model is used to analyze imperfect beam alignment in mmWave WPT systems \cite{Wang21}.

Early works on EMFE assessment within the SG framework include a comparison between theoretical and experimental EMFE distributions \cite{GontierAccess} and an analysis of max-min fairness power control. The flat-top pattern is used for EMFE analysis in coexisting sub-6~GHz and mmWave networks \cite{9511258}, and for a joint analysis of EMFE and SINR in $\beta$-Ginibre PP and inhomogeneous Poisson PPs in \cite{GontierTWC}. Notably, these two works are the soles that incorporate DBF into the SG model for EMFE. Joint analyses of EMFE and SINR are explored in specific scenarios in downlink (DL) alone \cite{manhattan, cell-free} or in DL and uplink \cite{gontier2023uplink, chen2023joint}. 

However, literature gaps persist in studying EMFE for idle users in DBF networks and accurately modeling antenna main and side lobes. 

\subsection{Contributions}
\label{ssec:contributions}
The key contributions of this paper are outlined below:
\begin{enumerate}
    \item \textit{Enhanced Antenna Modeling for EMFE Analysis}: This paper introduces the multi-cosine antenna gain model, which provides a more accurate representation of EMFE by accounting for side lobe effects. Mathematical formulations are presented for the marginal cumulative distribution function (CDF) characterizing the EMFE experienced by IUs. The results show that the multi-cosine model produces CDFs that align more closely with the theoretical antenna pattern than those obtained using existing models, thereby offering deeper insights into the differential EMFE experienced by AUs and IUs across the network.
    \item \textit{Novel spatial performance metric}: Beyond EMFE, the study expands to analyze broader network performance. This includes computing the SCAIU metric which assesses the probability of meeting SINR requirements for AUs while constraining EMFE for IUs below specified thresholds.
    \item \textit{Network management}: The study explores network performance across varying numbers of antenna elements and distances between IUs and AUs. This analysis identifies optimal configurations for achieving specified network performance levels, providing comprehensive insights into the dynamic aspects of network management.
\end{enumerate}

\section{System Model}
\label{sec:system_model}

\subsection{Topology}
\label{ssec:topology}
Consider a two-dimensional spatial domain $\mathcal{B} \in \mathbb{R}^2$ representing the network area, defined as a disk with radius $\tau$ centered at the origin. Within $\mathcal{B}$, let $\Psi = {X_i}$ denote the PPP representing the locations of BS $X_i$, all sharing the same technology, belonging to the same network provider, operating at a carrier frequency $f$, and being able to transmit at a effective isotropic radiated power (EIRP) $P_t$. The density of BSs in $\Psi$ is denoted by $\lambda$. Each BS is located at a height $z > 0$ relative to the users. The AU is positioned at the origin and is served by the nearest BS, while all other BSs act as potential interferers. An IU is located at a distance $d << \tau$ from the AU. The angle formed by the IU and $X_i$ at the AU's location is denoted as $\Theta_i$, as depicted in Fig.~\ref{fig:Active and idle user}. It is positive if measured in the trigonometric direction and negative if measured clockwise. The distance between the AU and $X_i$ is denoted as $R_i$ and the distance between the IU and $X_i$ as $W_i \! = \!\sqrt{R_i^2+d^2-2R_i d \cos(\Theta_i)}$. Additionally, $\delta_i \! = \! \sign(\Theta_i) \arccos\left((R_i - d \cos \Theta_i)/W_i\right)$ describes the angle between the AU and the IU from the perspective of $X_i$. The boresight direction of $X_i$ forms an angle $\xi_i$ with the AU, with $X_0$ being the serving BS and $\xi_0 = 0$.
\begin{figure}
    \centering
    \includegraphics[scale = 0.5, trim={2cm, 0cm, 0cm, 1cm}, clip]{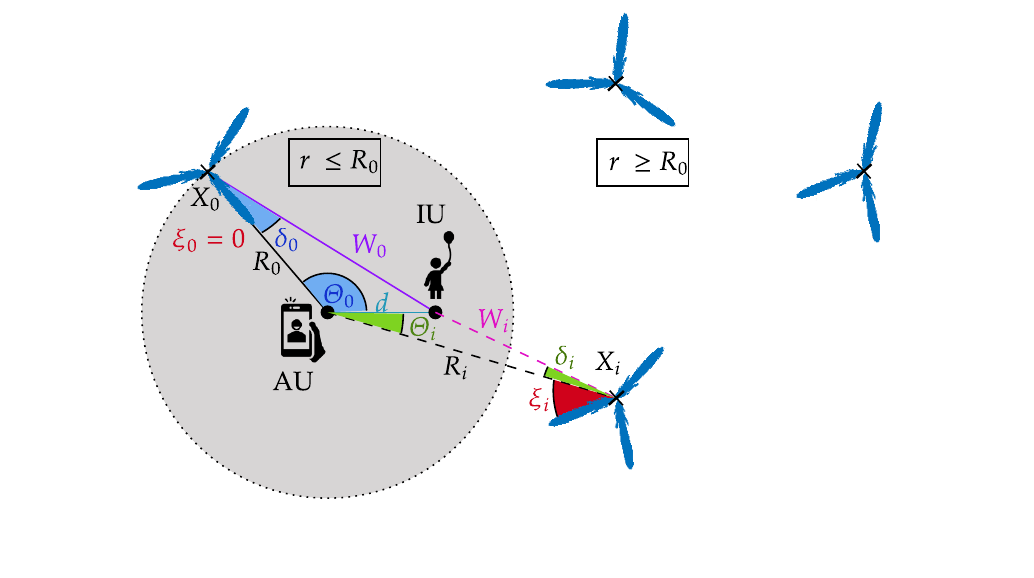}
    \caption{Scheme of the network with an AU at the origin and an IU at a distance $d$. The antenna pattern of a typical BS is shown in the lower right corner ($N \! = \! 16$).}
    \label{fig:Active and idle user}
\end{figure}  

In this configuration, the equipment of the AU is considered with a unit gain. Each BS is equipped with three identical ULAs oriented at 120$^\circ$ intervals, following 3GPP specifications, each array containing $N$ antenna elements with half-wavelength spacing. There is no intracell interference, which means that ULAs of the same BS do not interfere with each other. Consequently, the angle $\xi_i$ is a random variable uniformly distributed over $[-\pi/3,\pi/3[$. An exclusion radius $r_e$ around any user ensures no BS is located within this region. The normalized BS gain $G(\xi)$ is uniformly normalized by the maximum gain $G_{\textrm{\normalfont max}} \! = \! N$, and defined in Subsection~\ref{ssec:gain}. The network is operating at full capacity, with each ULA communicating with one AU.

For a PPP, the probability density function (PDF) of $R_0$ is given by 
\begin{equation}\label{eq:pdf_R0}
    f_{R_0}(r) = \frac{2 \pi \lambda r \exp(-\lambda \pi r^2)}{\exp(-\lambda \pi r_e^2)-\exp(-\lambda \pi \tau^2)} \quad \text{ for } r > r_e.
\end{equation}

To ensure the significance of $d$, the IU must be within the same cell as the AU. Hence, subsequent analyses assume $d$ is smaller than the mean cell radius, defined as $(2 \sqrt{\lambda})^{-1}$. If $d$ exceeds this threshold, the user is categorized as a \textit{random users} (RUs), implying no correlation with the AU's location.

\subsection{Propagation Model}
\label{ssec:propagation}

The propagation model is defined as
{\smalltonormalsize
\begin{equation}\label{eq:model}
    P_{r, i} = P_t G_i |h_i|^2 l_i 
\end{equation}}
where $P_{r, i}$ is the received power from BS $X_i$, $G_i$ is the BS gain towards the user (AU or IU), $|h_i|^2$ accounts for the fading and $l_i$ is the channel power gain due to path loss. Specifically, $l_i \! = \! l(X_i) \! = \! \kappa^{-1}\left(R_i^2+z^2\right)^{-\alpha/2}$ for a distance $R_i$ between $X_i$ and the AU, with $\alpha > 2$ the path loss exponent and $\kappa \! = \! (4\pi f/c)^2$ where $c$ is the speed of light. For the path between $X_i$ and the IU, $\Tilde{l}(X_i) \! = \! \kappa^{-1} \left(W_i^2+z^2\right)^{-\alpha/2}$. The fading $h_i$ follows a Nakagami-$m$ model, making $|h_i|^2$ gamma-distributed with shape parameter $m$ and scale parameter $1/m$. Consequently, the CDF of $|h_i|^2$ is expressed as $F_{|h_i|^2}(x) \! = \! \gamma(m,m x)/\Gamma(m)$.

Define $\Bar{P}_{r,i}\! =\! \Bar{P}_{r}(r_i) \!= \!P_t l_i(r_i)$. Since $G_0(0) = 1$, let $S_0(0)\! =\! \Bar{P}_{r}(r_0)|h_0|^2$ be the useful power received by the AU from $X_0$ and let $I_0(0) \! = \! \sum_{i \in \Psi \setminus\left\{X_0\right\}} \Bar{P}_{r,i}(r_i) G_{i}(\xi) |h_i|^2$ be the aggregate interference at the AU's location. Similarly, the signal coming from $X_0$ and reaching the IU is $S_0(d)\! =\! \Bar{P}_{r}(W_0)G(\delta_0)|h_0|^2$ and the aggregate interference at the IU's location is $I_0(d) \! = \! \sum_{i \in \Psi \setminus\left\{X_0\right\}} \Bar{P}_{r,i}(W_i) G_{i}(\xi_i+\delta_i) |h_i|^2$. Based on these definitions, the SINR experienced by the AU and conditioned on the distance to the serving BS is given by
{\smalltonormalsize
\begin{align}\label{eq:SINR}
    \text{\normalfont{SINR}}_0 = \frac{S_0}{I_0 + \sigma^2}
\end{align}}where $\sigma^2 \!= \! k\,B^{w}\,T\,\mathcal{F}^w$ is the thermal noise power. In the following, the performance metrics will be derived for the DL power EMFE defined as
{\smalltonormalsize
\begin{align}\label{eq:expP}
    \mathcal{P} = \sum_{i \in \Psi} \Bar{P}_{r,i}\,G_{i}\, |h_i|^2 = S_0+I_0,
\end{align}}which can be converted into a total IPD as, by definition,
{\smalltonormalsize
\begin{equation}\label{eq:ExpWM2}
    \mathcal{S} = \sum_{i \in \Psi} \frac{P_{t}\,G_{i}\,|h_i|^2}{4\pi \left(r_i^2+z^2\right)^{\alpha/2}} = \frac{\kappa}{4\pi}\mathcal{P}.
\end{equation}}

\subsection{Antenna Pattern Models}
\label{ssec:gain}
The various antenna patterns considered here can be observed in Fig.~\ref{fig:Beamforming_gain_theta_zoom3}, with a zoom on the first side lobes. The normalized gain of one vertical ULA with $N$ omnidirectional antenna elements and half-wavelength spacing is given~by
\begin{equation}\label{eq:Gact}
    G_{\textrm{ULA}}(\varphi) = \frac{\sin^2\left(\frac{\pi\,N}{2}\,\sin(\varphi)\right)}{N^2\,\sin^2\left(\frac{\pi}{2}\,\sin(\varphi)\right)}
\end{equation} where $\varphi \in [-\pi/3, \pi/3[$ and $\varphi = 0$ corresponds to the maximal normalized gain of the main lobe $G_{\textrm{ULA}}(0) = 1$. While the gain function is closely approximated by a squared cardinal sinus function, this approximation also yields intractable mathematical expressions for calculating performance metrics. Instead, a flat-top antenna pattern is widely used in the literature \cite{Kalamkar22}, and given by
\begin{equation}\label{eq:Gft}G_{ft}(\varphi) = \left\{
    \begin{array}{ll}
    1   \quad  & \text{if  } |\varphi|\leq \varphi_{3\text{dB}} \\
    g    \quad &  \text{otherwise}
    \end{array}\right.
\end{equation} where $\varphi_{3\text{dB}}$ is half of the half-power beamwidth (HPBW) of the theoretical ULA pattern and $g$ is the chosen side lobe gain.

The truncated cosine antenna pattern \cite{yu2017} approximates the main lobe of the theoretical pattern while assuming null gain for the side lobes and is expressed as
\begin{equation}\label{eq:Gcos}G_{cos}(\varphi) = \left\{
    \begin{array}{ll}
    \cos^2\left(N \pi \varphi/4\right)   \quad  & \text{if  } |\varphi|\leq 2/N\\
    0    \quad &  \text{otherwise.}
    \end{array}\right.
\end{equation}

The Gaussian approximation is given by
\begin{equation}\label{eq:GG}
\end{equation}where $\varphi \in [-\pi/3, \pi/3[$ and $\eta = \ln\left(\frac{N-g}{N/2-g}\right)/\varphi_{3\text{dB}}^2$.

Lastly, we introduce the new multi-cosine antenna pattern model defined as
{\smalltonormalsize
\begin{equation}\label{eq:Gmc}G_{mc}(\varphi) = \left\{
    \begin{array}{ll}
    \!\cos^2\!\left(\frac{N \pi \varphi}{4}\right)  \!   & \text{if  } |\varphi|\leq 2/N\\
    \!\chi_k \sin^2\!\left(\frac{N \pi \varphi}{2}\right)  \!   & \text{if  } \frac{2 k}{N}\!\leq\!|\varphi|\!\leq \!\frac{2 k+2}{N}
    \end{array}\right.
\end{equation}} where $\chi_k = \frac{\sin^2(N x_k)}{N^2 \sin^2(x_k)}$ is the extrema of the $k$th side lobe of the theoretical gain function, with $0 \leq k \leq k_{\textrm{\normalfont max}}$ and $\chi_0\! = \!1$. The choice of $k_{\textrm{\normalfont max}}$ is flexible but should remain below $\floor*{N \sqrt{3}/4-1}$ to prevent side lobes from extending beyond each ULA's sector. The values of $x_k$ are well approached by the ordered positive solutions of $N \tan(x) = \tan(N x)$.

\begin{figure}
\centering
\begin{minipage}{.5\linewidth}
  \centering
\includegraphics[width=0.95\linewidth, trim={6.5cm, 9cm, 7cm, 10cm}, clip]{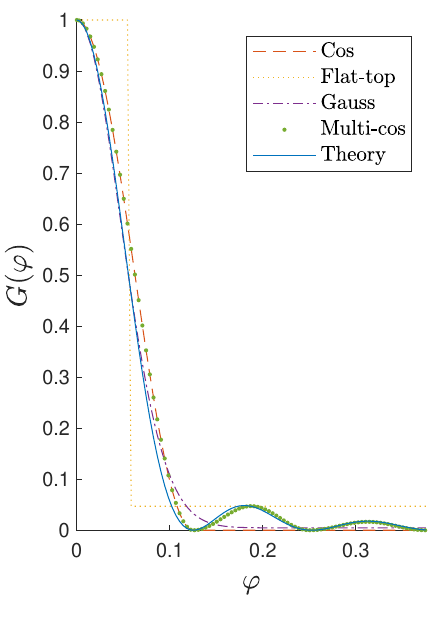}
\end{minipage}%
\begin{minipage}{.5\linewidth}
  \centering
\includegraphics[width=0.95\linewidth, trim={6.5cm, 9cm, 7cm, 10cm}, clip]{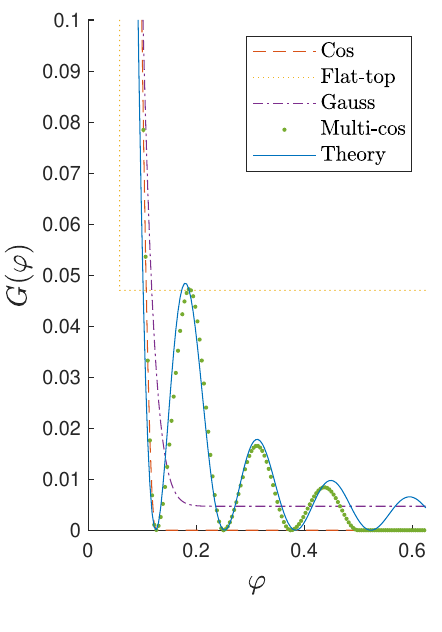}  
\end{minipage}
\caption{Antenna patterns for positive angles and zoom ($N \! = \! 16$)}
\label{fig:Beamforming_gain_theta_zoom3}
\end{figure}

\section{Mathematical Framework}
\label{sec:analytical_results}


\subsection{Preliminaries}
\label{ssec:prelim}
Our analysis begins with the computation of the central moments of the gain models for each BS. 

\begin{proposition}\label{prop:gain}
    The $k$th moments ($k>0$) of the gain models are given by
    \begin{equation}\label{eq:mom_Gft}
        \mathbb E\left[G_{ft}^k(\varphi)\right] = \frac{3}{\pi} \varphi_{3\textrm{\normalfont dB}} (1-g^k) + g^k;
    \end{equation}
    \begin{equation}\label{eq:mom_Gcos}
        \mathbb E\left[G_{cos}^k(\varphi)\right] = \dfrac{6 \,\Gamma(k+1/2)}{N {\pi}^{3/2} \Gamma(k+1)};
    \end{equation}
    \begin{equation}\label{eq:mom_GG}
        \mathbb E\left[G_{G}^k(\varphi)\right] = g^k + \frac{3}{2} \sum\limits_{p = 1}^k  \binom{k}{p} (N-g)^p g^{k-p} \frac{\erf\left(\tfrac{\pi \sqrt{p \eta}}{3}\right)}{\sqrt{\pi p \eta}};
    \end{equation}
    \begin{equation}\label{eq:mom_Gmc}
        \mathbb E\left[G_{mc}^k(\varphi)\right] = \dfrac{6 \,\Gamma(k+1/2)}{N {\pi}^{3/2} \Gamma(k+1)} (1+\chi_{i_{\textrm{\normalfont max}}}^\dagger)
    \end{equation}
where $\erf(\cdot)$ is the error function and $\chi_{i_{\textrm{\normalfont max}}}^\dagger = \sum\limits_{i = 1}^{i_{\textrm{\normalfont max}}} \chi_i$.
\end{proposition}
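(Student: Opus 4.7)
Since each base station's boresight offset $\varphi = \xi_i$ (and thus the argument of $G$ after rotation) is uniform on $[-\pi/3,\pi/3[$ by the 3GPP sectorization assumption stated in Subsection~\ref{ssec:topology}, every moment reduces to a one-dimensional integral
\begin{equation*}
\mathbb E\bigl[G^k(\varphi)\bigr] \;=\; \frac{3}{2\pi}\int_{-\pi/3}^{\pi/3} G^k(\varphi)\,d\varphi .
\end{equation*}
The plan is to evaluate this integral in closed form for each of the four patterns, using the piecewise structure of $G$, a monomial substitution that normalizes the argument, and the Beta identity $\int_{0}^{\pi/2}\cos^{2k}u\,du = \int_{0}^{\pi/2}\sin^{2k}u\,du = \tfrac{\sqrt{\pi}}{2}\,\Gamma(k+1/2)/\Gamma(k+1)$.

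For the flat-top pattern \eqref{eq:Gft}, $G_{ft}^k$ takes only the two values $1$ and $g^k$; splitting the integration domain at $\pm\varphi_{3\mathrm{dB}}$ and multiplying each piece by $3/(2\pi)$ immediately yields \eqref{eq:mom_Gft}. For the truncated cosine \eqref{eq:Gcos}, the integrand vanishes outside $[-2/N,2/N]$; substituting $u = N\pi\varphi/4$ maps this interval onto $[-\pi/2,\pi/2]$ with Jacobian $4/(N\pi)$, and applying the Beta identity to $\int_{-\pi/2}^{\pi/2}\cos^{2k}u\,du = \sqrt{\pi}\,\Gamma(k+1/2)/\Gamma(k+1)$ delivers \eqref{eq:mom_Gcos}.

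For the Gaussian model (whose body \eqref{eq:GG} is to be read as $G_G(\varphi) = g + (N-g)e^{-\eta\varphi^2}$, as can be inferred from the stated $\eta$), I would raise this sum to the $k$th power by the binomial theorem, peel off the $p=0$ term (contributing $g^k$), and then evaluate each remaining Gaussian integral via $\int_{-\pi/3}^{\pi/3}e^{-p\eta\varphi^2}\,d\varphi = \sqrt{\pi/(p\eta)}\,\erf(\pi\sqrt{p\eta}/3)$. Multiplying by $3/(2\pi)$ and collecting produces \eqref{eq:mom_GG}. For the multi-cosine pattern \eqref{eq:Gmc}, the main lobe contributes exactly the truncated-cosine moment computed above. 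Each side lobe $i\ge 1$ contributes $\tfrac{3}{2\pi}\cdot 2\,\chi_i^k \int_{2i/N}^{(2i+2)/N}\sin^{2k}(N\pi\varphi/2)\,d\varphi$; the substitution $u = N\pi\varphi/2$ carries the integration interval onto $[i\pi,(i+1)\pi]$, over which $\sin^{2k}u$ has the same integral $\sqrt{\pi}\,\Gamma(k+1/2)/\Gamma(k+1)$ as over $[0,\pi]$ by periodicity. Summing over $1\le i \le i_{\textrm{max}}$ and factoring out the common $6\,\Gamma(k+1/2)/(N\pi^{3/2}\Gamma(k+1))$ yields \eqref{eq:mom_Gmc}.

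No step is conceptually difficult; the main obstacle is purely bookkeeping. The Gaussian case is the most delicate because one must expand a binomial, single out the constant term, and keep track of the factor $\sqrt{p\eta}$ inside the error function. For the multi-cosine case, care is needed to verify that the bound $k_{\textrm{max}} \le \lfloor N\sqrt{3}/4 - 1 \rfloor$ guarantees that every side-lobe interval $[2i/N,(2i+2)/N]$ stays within $[-\pi/3,\pi/3]$, so that no truncation correction is required and the $\sin^{2k}$ integral can be replaced by a full-period integral without loss.
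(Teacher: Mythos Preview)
Your proposal is correct and follows exactly the paper's own approach, which consists of the single instruction to evaluate $\tfrac{3}{2\pi}\int_{-\pi/3}^{\pi/3} G^k(\varphi)\,d\varphi$ for each pattern; you have simply supplied the explicit substitutions and Beta/error-function evaluations that the paper omits. Note only that your side-lobe computation (correctly) produces the factor $1+\sum_{i}\chi_i^{\,k}$ rather than $1+\sum_{i}\chi_i$ as printed in~\eqref{eq:mom_Gmc}; this is a typographical slip in the stated formula, not a flaw in your argument.
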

\begin{proof}
    The proof is obtained by integrating $\frac{3}{2\pi}\int_{-\pi/3}^{\pi/3} \!G^2(\varphi) d\varphi$. The factor of 3 arises because there are three identical ULAs, each with an equal probability of falling into one of their sectors.
\end{proof}

To derive network performance metrics, the characteristic function (CF) of the useful signal and interference must be calculated at the AU's and the IU's location. The CF of the useful signal is given in Proposition~\ref{prop:cf_S}.
\begin{proposition}\label{prop:cf_S}
    The CF of the useful signal for the propagation model \eqref{eq:model}, from the AU's point of view, conditioned on the distance to the nearest BS $R_0$, is
    \begin{equation}\label{eq:cfS0}
        \phi_S(q;0|R_0) = \mathbb E_{S_0}\left[e^{jq {S_0(R_0)}}\right] = \left(1-jq \Bar{P}_r(R_0)/m\right)^{-m}.
    \end{equation}
    From the IU's point of view, it is given by
    \begin{equation}\label{eq:cfSidle}
    \begin{split}
        \phi_S(q;d|X_0) = \left(1-jq \Bar{P}_r(W_0) G(\delta_0)/m\right)^{-m} \mathds 1\left[|\delta_0| \leq \pi/3\right] \\
        + \eta_S(q|X_0) (1-\mathds 1\left[|\delta_0| \leq \pi/3\right]).
    \end{split}
    \end{equation}
    where $\eta_S(q|X_0)$ depends on the considered gain pattern:
    \begin{itemize}
        \item {Flat-top pattern}
    \end{itemize} 
{\small
\begin{equation}\label{eq:eta_S_ft}
    \begin{split}
        \eta_S(q|X_0) =  \frac{3 \varphi_{3\textrm{\normalfont dB}}}{\pi}\left(1-jq \Bar{P}_r(W_0)/m\right)^{-m}\\
        + \left(1-\frac{3\varphi_{3\textrm{\normalfont dB}}}{\pi}\right) \left(1-jq \Bar{P}_r(W_0) g/m\right)^{-m}
    \end{split}
\end{equation}}

    \begin{itemize}
        \item {Gaussian pattern}
    \end{itemize} 
{\small
\begin{equation}\label{eq:eta_S_G}
\begin{split}
    &\eta_S(q|X_0) =  \left(1-jq \Bar{P}_r(W_0) g/m\right)^{-m}\\
    &+\sum\limits_{p=1}^\infty  \frac{3\erf\left(\tfrac{\pi \sqrt{p \eta}}{3}\right)}{2 \sqrt{\pi p \eta}} (N-g)^p \frac{\left(jq \Bar{P}_r(W_0)\right)^p}{\left(1-\frac{jq \Bar{P}_r(W_0)}{m}\right)^{(m+p)}}\frac{ \Gamma(m+p) }{\Gamma(m) m^p p!} 
\end{split}
\end{equation}}

    \begin{itemize}
        \item {Multi-cos pattern}
    \end{itemize} 
{\small
\begin{equation}\label{eq:eta_S_mc}
        \eta_S(q|X_0) =  1+\frac{6(1+\chi_{i_{\textrm{\normalfont max}}}^\dagger)}{N\pi}  \left(\!{}_2F_1\left(\frac{1}{2}, m;1;\frac{j q \Bar{P}_r(W_0)}{m}\right) -1\right)
\end{equation}}
\end{proposition}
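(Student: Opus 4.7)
The plan is to compute each CF directly from its definition as an expectation over $|h_0|^2$ and the random antenna pointing, exploiting the Gamma distribution of $|h_0|^2$.

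For \eqref{eq:cfS0}, the serving BS points its beam at the AU, so $G_0(0)=1$ and $S_0(R_0)=\Bar{P}_r(R_0)|h_0|^2$ is a rescaled Gamma$(m,1/m)$ variable whose CF is $(1-jq\theta)^{-m}$; conditioning on $R_0$ gives \eqref{eq:cfS0} in one line. For \eqref{eq:cfSidle}, the signal reaching the IU from $X_0$ has the deterministic gain $G(\delta_0)$ only when the IU lies in the same ULA sector as the AU, i.e., $|\delta_0|\leq\pi/3$; in that regime Gamma-rescaling produces the first summand. When $|\delta_0|>\pi/3$, the IU is illuminated by a different ULA of $X_0$ whose boresight is uniform on $[-\pi/3,\pi/3)$, since it serves an independent AU. Averaging first over $|h_0|^2$ and then over this uniform boresight yields
\begin{equation*}
\eta_S(q|X_0)=\frac{3}{2\pi}\int_{-\pi/3}^{\pi/3}\!\left(1-\frac{jq\Bar{P}_r(W_0)G(\varphi)}{m}\right)^{-m}\!d\varphi,
\end{equation*}
which is the common starting point for \eqref{eq:eta_S_ft}--\eqref{eq:eta_S_mc}.

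To specialize this integral to each pattern, I would proceed as follows. For the flat-top, the integrand is piecewise constant, so the integral collapses to the convex combination in \eqref{eq:eta_S_ft} with weights $3\varphi_{3\text{dB}}/\pi$ and $1-3\varphi_{3\text{dB}}/\pi$. For the Gaussian pattern $G_G(\varphi)=g+(N-g)e^{-\eta\varphi^2}$, I would factor $(1-jq\Bar{P}_r(W_0)g/m)^{-m}$ out of the integrand, expand the residual via the generalized binomial series $(1-x)^{-m}=\sum_p\tfrac{\Gamma(m+p)}{\Gamma(m)p!}x^p$, and integrate term by term; each Gaussian integral $\int_{-\pi/3}^{\pi/3}e^{-p\eta\varphi^2}\,d\varphi$ produces the $\erf$ factor in \eqref{eq:eta_S_G}. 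For the multi-cosine, I would split the support into the main lobe $|\varphi|\leq 2/N$, the $i_{\text{max}}$ two-sided side-lobe windows, and the residual zero-gain region; on each lobe the substitution $u=N\pi\varphi/4$ (main) or $u=N\pi\varphi/2$ (sides) reduces the integral to the Euler representation $\int_0^{\pi/2}(1-z\sin^2 u)^{-m}\,du=\tfrac{\pi}{2}\,{}_2F_1(m,1/2;1;z)$.

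The main obstacle I anticipate is the multi-cosine case. A direct evaluation produces one ${}_2F_1(1/2,m;1;C\chi_k)$ factor per side lobe (with $C=jq\Bar{P}_r(W_0)/m$), rather than the single hypergeometric appearing in \eqref{eq:eta_S_mc}. Matching the stated form requires a first-order expansion in the small parameters $\chi_k$, namely $(1-C\chi_k\sin^2 u)^{-m}\approx 1+\chi_k\bigl((1-C\sin^2 u)^{-m}-1\bigr)$, which turns each side-lobe integral into $(4/N)+(4\chi_k/N)\bigl({}_2F_1(1/2,m;1;C)-1\bigr)$ and sums to a single $\chi^\dagger_{i_{\text{max}}}\bigl({}_2F_1(1/2,m;1;C)-1\bigr)$ term after factoring. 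Combining this with the exact main-lobe contribution (which contributes the $\chi_0=1$ part) and the unit integrand on the residual region of width $2\pi/3-4(i_{\text{max}}+1)/N$, and finally absorbing the prefactor $3/(2\pi)$, produces \eqref{eq:eta_S_mc}.
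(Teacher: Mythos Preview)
Your proposal is correct, and the route you take differs from the paper's in a way worth noting. The paper does not first close the fading average; instead it Taylor-expands $e^{zG(\xi)|h|^2}$, swaps sum and expectations, and obtains $\eta_S(q|X_0)=1+\sum_{k\ge1}\frac{z^k}{k!}\,\mathbb E_\xi[G^k(\xi)]\,\frac{\Gamma(m+k)}{\Gamma(m)m^k}$, then plugs in the gain moments from Proposition~\ref{prop:gain} and resums (binomial series for the flat-top, the ${}_2F_1$ series for multi-cosine, a double-sum rearrangement for the Gaussian). Your route---take $\mathbb E_{|h|}$ first to get $(1-zG(\varphi)/m)^{-m}$ and then integrate in $\varphi$---is more elementary for the flat-top (piecewise constant) and more structural for the multi-cosine, where the Euler integral $\int_0^{\pi/2}(1-z\sin^2 u)^{-m}\,du=\tfrac{\pi}{2}\,{}_2F_1(\tfrac12,m;1;z)$ replaces the series identification. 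For the Gaussian, the two routes essentially coincide once you expand $(1-x)^{-m}$ as a binomial series.

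Your observation about the multi-cosine case is exactly right and is a useful clarification: an exact angular integration produces one ${}_2F_1(\tfrac12,m;1;C\chi_i)$ per side lobe, and collapsing these to the single ${}_2F_1(\tfrac12,m;1;C)$ in~\eqref{eq:eta_S_mc} requires linearizing in the small side-lobe amplitudes $\chi_i$. In the paper this approximation is not visible in Appendix~\ref{sec:proofcfS0}; it is already absorbed into the moment formula~\eqref{eq:mom_Gmc}, which carries the factor $(1+\chi^\dagger_{i_{\max}})$ rather than $(1+\sum_i\chi_i^{\,k})$. Your integrand-level linearization $(1-C\chi_i\sin^2 u)^{-m}\approx 1+\chi_i\bigl[(1-C\sin^2 u)^{-m}-1\bigr]$ is term-by-term equivalent to that moment-level replacement $\chi_i^{\,k}\to\chi_i$, so the two derivations land on the same closed form, but yours makes the approximation step explicit.
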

\begin{proof}
    The proof in \eqref{eq:cfS0} is straightforward after applying the expectation operator on $|h|^2$. The proof in \eqref{eq:cfSidle} is given in Appendix~\ref{sec:proofcfS0}.
\end{proof}

The expression of the CF of the interference from the point of view of the IU is assumed to be the same as the one from the point of view of the AU. This is justified by the following observations.
\begin{itemize}
    \item For the AU as well as for the IU, the orientation of the beam of any interfering BS is random.
    \item Since $d << \tau$, it can be assumed that the disk $\mathcal{B}$ centered on the AU coincides with a disk $\mathcal{B}'$ centered on the IU.
    \item Since $d << (2\sqrt{\lambda})^{-1}$, it is assumed that the AU's closest BS is also the IU's closest BS. 
\end{itemize}

\begin{proposition}\label{prop:cf}
    The CF of the interference for the propagation model \eqref{eq:model}, conditioned on the distance to the nearest BS $R_0$, is
    \begin{equation}
        \phi_I(q|R_0) = \mathbb E_{I_0}\left[e^{jq {I_0}}\right] = \exp(-\pi \lambda \eta_I(q|R_0))
    \end{equation}
    where $\eta_I(q|R_0)$ depends on the considered gain pattern:

    \begin{itemize}
        \item {Flat-top pattern}
    \end{itemize} 
{\footnotesize
\begin{equation}
    \begin{split}
        \eta_I(q|R_0) =  \left[r^2-\left(r^2+z^2\right)\frac{3\varphi_{3\textrm{\normalfont dB}}}{\pi} \,_2\!F_1\left(m, -\delta;1-\delta;\frac{j q \Bar{P}_r(r)}{m}\right) \right]_{r = \tau}^{r=r_0}\\
        -\left[\left(r^2+z^2\right)\left(1-\frac{3}{\pi}\varphi_{3\textrm{\normalfont dB}}\right) \,_2\!F_1\left(m, -\delta;1-\delta;\frac{j q \Bar{P}_r(r)g}{m}\right) \right]_{r = \tau}^{r=r_0}\!.
    \end{split}
\end{equation}}

    \begin{itemize}
        \item {Gaussian pattern}
    \end{itemize} 
{\small
\begin{equation}
\begin{split}
    &\eta_I(q|R_0) = \left[r^2-\left(r^2+z^2\right) {}_2F_1(-d,m;1-d;j q \Bar{P}_r(r)g/m)\right.\\
    &\left.+\delta \left(r^2+z^2\right)\sum\limits_{p=1}^\infty  \frac{3\erf\left(\tfrac{\pi \sqrt{p \eta}}{3}\right)}{2 \sqrt{\pi p \eta}} \left(\frac{N}{g}-1\right)^p \frac{\Gamma(p+m)}{\Gamma(m)p!}\right.\\
    &\left.\quad \times B\left(\frac{j q \Bar{P}_r(r)g}{m};p-\delta,1-p-m\right)\left(\frac{j q \Bar{P}_r(r)g}{m}\right)^\delta\right]_{r = r_0}^{r=\tau}.
\end{split}
\end{equation}}
    \begin{itemize}
        \item {Multi-cos pattern}
    \end{itemize} 
{\footnotesize
\begin{equation}
\begin{split}
        &\eta_I(q|R_0) = \frac{3(1+\chi_{i_{\textrm{\normalfont max}}}^\dagger)}{N \pi} \left[\left(r^2+z^2\right)\left(2-2 \,{}_2F_1\left(\frac{1}{2}, m; 1, \frac{j q \Bar{P}_r(r)}{m}\right)\right.\right.\\
        &\quad\left.\left.+\frac{j q \Bar{P}_r(r)}{1-\delta} {}_3\!F_2\left(\frac{3}{2}, 1+m, 1-\delta;2, 2-\delta;\frac{j q \Bar{P}_r(r)}{m}\right)\right) \right]_{r = r_0}^{r=\tau}
\end{split}
\end{equation}}
${}_{p}F_{q}(a_{1},\ldots ,a_{p};b_{1},\ldots ,b_{q};z)=\sum _{n=0}^{\infty }{\frac {(a_{1})_{n}\cdots (a_{p})_{n}}{(b_{1})_{n}\cdots (b_{q})_{n}}}\,{\frac {z^{n}}{n!}}$ is the generalized hypergeometric function with $(a)_k = \Gamma(a+k)/\Gamma(a)$ is the Pochhammer symbol. We use the notation $[f(x)]_{x = a}^{x=b} = f(b)-f(a)$.
\end{proposition}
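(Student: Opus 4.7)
The plan is to compute $\phi_I(q|R_0)$ by applying the probability generating functional (PGFL) of the PPP conditioned on the nearest-BS distance. First I would write
\[
\phi_I(q|R_0) = \mathbb E\!\left[\prod_{X_i \in \Psi \setminus \{X_0\}} \mathbb E_{G_i,h_i}\!\left[e^{jq \bar P_{r}(r_i) G_i |h_i|^2}\right]\right],
\]
where the inner expectations are taken with respect to the i.i.d.\ fading $|h_i|^2$ and the i.i.d.\ uniform boresight draws $\xi_i \sim \mathcal U[-\pi/3,\pi/3)$ (combining the three sectored ULAs per BS). By Slivnyak–Mecke, conditioning on $X_0$ being the nearest point leaves $\Psi \setminus \{X_0\}$ as a PPP of intensity $\lambda$ on the annulus $\mathcal B \setminus B(0,R_0)$. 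Applying the PGFL and switching to polar coordinates yields
\[
\phi_I(q|R_0) = \exp\!\left(-2\pi\lambda \int_{R_0}^{\tau}\!\bigl(1 - \mathbb E_{G,h}[e^{jq \bar P_r(r) G |h|^2}]\bigr)\, r\, dr\right).
\]
The Nakagami-$m$ MGF collapses the fading expectation to $(1 - jq \bar P_r(r) G/m)^{-m}$, after which only an expectation over $G(\varphi)$ and a radial integral remain. Substituting $u = r^2 + z^2$ so that $\bar P_r(r) = (P_t/\kappa)\, u^{-\alpha/2}$ aligns the $2\pi\lambda\, r\, dr$ factor with the prefactor $\pi\lambda$ in the stated $\exp(-\pi\lambda\,\eta_I)$.

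Next I would specialize the angular expectation to each antenna model. For the flat-top pattern, $\mathbb E_\varphi[(1-jq\bar P_r G(\varphi)/m)^{-m}]$ is a two-term discrete mixture with weights $3\varphi_{3\text{dB}}/\pi$ and $1-3\varphi_{3\text{dB}}/\pi$, as already exploited in \eqref{eq:eta_S_ft}. The subsequent antiderivative reduces to the classical identity
\[
\int \bigl(1 - A u^{-\alpha/2}\bigr)^{-m} du \;=\; u \cdot {}_2F_1\!\bigl(m,-\delta;\,1-\delta;\, A u^{-\alpha/2}\bigr),\qquad \delta = 2/\alpha,
\]
which is proved termwise from the binomial series. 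Evaluated between $u = R_0^2+z^2$ and $u = \tau^2+z^2$, this gives the stated $[\,\cdot\,]_{r=\tau}^{r=r_0}$ expression with ${}_2F_1$ factors in both the unit-gain and $g$-gain pieces.

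For the Gaussian pattern, I would first expand the Nakagami factor as a power series in $(G(\varphi)-g)$, then integrate each term against the Gaussian kernel over $\varphi$ to produce the $\operatorname{erf}(\pi\sqrt{p\eta}/3)/\sqrt{p\eta}$ coefficients already seen in Proposition~\ref{prop:gain} and \eqref{eq:eta_S_G}. The remaining $u$-integral for each summand $\int u^0 (u^{-\alpha/2})^p (1 - A u^{-\alpha/2})^{-(m+p)} du$ is exactly the integral representation of the incomplete Beta function $B(\,\cdot\,;\,p-\delta,\,1-p-m)$, yielding the stated series. For the multi-cosine pattern, the angular expectation over the $\cos^2$ main lobe (together with the $\sin^2$ side lobes weighted by $\chi_k$) recognizes as a Gauss hypergeometric integral and collapses to ${}_2F_1(1/2,m;1;jq\bar P_r(r)/m)$, as in \eqref{eq:eta_S_mc}; integrating the resulting ${}_2F_1$ in $u$ termwise raises the order by one and produces the ${}_3F_2(3/2,1+m,1-\delta;2,2-\delta;\cdot)$ factor.

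The main obstacle is the Gaussian case: the angular expansion and radial integration are only formal a priori, and swapping the infinite sum with the $u$-integral must be justified by a uniform bound on the partial sums (which follows, for $q$ in a compact set and $r \ge R_0$, from the decay of the factorial and of $(N-g)^p/m^p$ multiplied by the bounded hypergeometric-type remainder). Once this exchange is legitimate, matching the resulting antiderivative with the incomplete Beta function requires only a careful re-indexing of the Pochhammer symbols; the flat-top and multi-cosine cases are then essentially corollaries of the same PGFL computation, and assembling the three expressions yields the stated $\eta_I(q|R_0)$.
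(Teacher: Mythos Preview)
Your proposal is correct, but it follows a different order of operations than the paper. After the PGFL step (which you both share), you evaluate the fading expectation first via the Nakagami MGF to obtain $(1-jq\bar P_r(r)G/m)^{-m}$, then take the angular expectation pattern by pattern, and finally do the radial integral via the antiderivative identity $\int (1-Au^{-\alpha/2})^{-m}\,du = u\cdot{}_2F_1(m,-\delta;1-\delta;Au^{-\alpha/2})$. The paper instead keeps the expectation over $\xi$ and $|h|$ outside, performs the radial integral first by the substitution $t=-jqP_r(r)$ to obtain an upper incomplete Gamma function, expands $\Gamma(-\delta,\cdot)$ as a power series, and only then applies the expectations termwise using the gain moments $\mathbb E_\xi[G^k]$ from Proposition~\ref{prop:gain} and the Nakagami moments $\mathbb E[|h|^{2k}]=\Gamma(m+k)/(\Gamma(m)m^k)$, arriving at the unified series
\[
\eta_I(q|R_0)=\delta\Bigl[(r^2+z^2)\sum_{k\ge 1}\frac{\mathbb E_\xi[G^k]\,\Xi^k(r)}{k!\,(k-\delta)}\,\frac{\Gamma(m+k)}{\Gamma(m)m^k}\Bigr]_{r=R_0}^{r=\tau}.
\]
The paper's route is deliberately chosen for modularity: once this master series is in hand, each gain model enters only through the closed-form moments of Proposition~\ref{prop:gain}, so adding a new pattern or fading law requires no new integral. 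Your route is more direct for a fixed pattern and avoids the incomplete Gamma detour, but you must redo the angular and radial integrals for each model separately; the two approaches are algebraically equivalent and lead to the same hypergeometric/Beta expressions.
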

\begin{proof}
    The proof is similar to the proof of the Laplace transform of the interference in Lemma~2 and Appendix~D in~\cite{yu2017}. We recall the main steps with a few changes in Appendix~\ref{sec:proofcf}.
\end{proof}

\subsection{EMFE}
\label{ssec:efme_cdf}
Using the calculations from the previous section, the CDF of the EMFE for the IU or the AU is provided in Theorem~\ref{th:exp_cdf}.
\begin{theorem}\label{th:exp_cdf}
    The CDF of the EMFE of a user, for the propagation model \eqref{eq:model} in a H-PPP, is given by
\begin{align*}
    &F_{\textrm{EMFE}}^{X}(T_{e}) = \mathbb{P}\left[\mathcal{P}^{X} < T_{e}\right]\\
    &\quad= \frac{1}{2}-\int_{r_e}^{\tau}\int_{0}^{\infty} \frac{1}{\pi q}\textrm{\normalfont Im}\!\left[\phi_E^{X}(q;d|r_0)\,e^{-jqT_{e}}\right]\,dq \, f_R(r_0)\,dr_0.
\end{align*}where $X$ is either the AU with $\phi_E^{AU}(q|r_0)\! = \!\phi_{S}(q;0|r_0)\,\phi_{I}(q|r_0)$ or the IU with $\phi_E^{IU}(q|r_0)\! =\! \frac{1}{2\pi} \int_{0}^{2\pi}\!\phi_{S}(q;d|r_0,\theta)d\theta \,\phi_{I}(q|r_0)$. The EMFE limit is denoted $T_e$.
\end{theorem}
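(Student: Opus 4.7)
The plan is to obtain the CDF of $\mathcal{P}^X = S_0 + I_0$ via the Gil-Pelaez inversion formula, which states that for any real random variable $Y$ with characteristic function $\phi_Y$,
\begin{equation*}
    F_Y(t) = \frac{1}{2} - \frac{1}{\pi}\int_{0}^{\infty} \frac{\textrm{Im}\!\left[\phi_Y(q)\,e^{-jqt}\right]}{q}\,dq.
\end{equation*}
To apply this, I first condition on the serving-BS distance $R_0$ (and, in the IU case, on the relative angle between the IU and $X_0$) so that the useful signal and the aggregate interference become independent, allowing their CFs to multiply.

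First, for the AU, I would observe that given $R_0=r_0$, the useful power $S_0 = \bar{P}_r(r_0)|h_0|^2$ depends only on the fading of the serving link and on $r_0$, whereas the aggregate interference $I_0$ depends on the PPP $\Psi\setminus\{X_0\}$ (which, by Slivnyak's theorem, is still a PPP of the same intensity outside the exclusion ball around $X_0$), on the i.i.d.\ fading of the interferers, and on the independent uniformly distributed beam directions $\xi_i$. Hence $S_0$ and $I_0$ are conditionally independent given $R_0$, and the conditional CF of $\mathcal{P}^{AU}$ factorizes as $\phi_E^{AU}(q|r_0) = \phi_S(q;0|r_0)\,\phi_I(q|r_0)$, where the two factors are precisely those supplied by Propositions~\ref{prop:cf_S} and~\ref{prop:cf}. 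Applying Gil-Pelaez conditionally, then unconditioning by integrating against $f_{R_0}(r_0)$ from~\eqref{eq:pdf_R0} over $[r_e,\tau]$, yields the stated expression.

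For the IU, the additional randomness is the angle $\Theta_0$ of $X_0$ as seen from the origin. Given $R_0$ and $\Theta_0$, the quantities $W_0$ and $\delta_0$ appearing in $S_0(d)$ are deterministic, so the same conditional independence argument goes through and one obtains $\phi_S(q;d|r_0,\theta)\,\phi_I(q|r_0)$ as the conditional CF of $\mathcal{P}^{IU}$. Because the PPP is isotropic, $\Theta_0$ is uniform on $[0,2\pi]$ and independent of $R_0$; averaging the conditional CF over $\theta$ (and using the three justifications listed just before the statement to keep the \emph{same} interference CF as in the AU case) produces
\begin{equation*}
    \phi_E^{IU}(q|r_0) = \frac{1}{2\pi}\int_{0}^{2\pi}\!\phi_S(q;d|r_0,\theta)\,d\theta \;\phi_I(q|r_0).
\end{equation*}
A final application of Gil-Pelaez, followed by integration against $f_{R_0}(r_0)$, completes the derivation.

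The main technical obstacle is not Gil-Pelaez itself but the legitimacy of the factorization and of interchanging the order of integration between $q$, $r_0$, and (for the IU) $\theta$. The conditional independence of $S_0$ and $I_0$ given $R_0$ must be justified carefully, since the exclusion of $X_0$ from the interference point process correlates it with $R_0$; this is handled by Slivnyak's theorem, which reduces the interference to a PPP on the annulus $\{r_0 < r < \tau\}$ whose randomness is independent of $|h_0|^2$. The Fubini swaps require that $q\mapsto \textrm{Im}[\phi_E^X(q;d|r_0)e^{-jqT_e}]/q$ be integrable uniformly enough in $r_0$ (and $\theta$); this follows from the standard tail bound $|\phi(q)-1|=O(q)$ near the origin together with the decay of $|\phi_I(q|r_0)|$ at infinity ensured by the exponent $-\pi\lambda\eta_I$ in Proposition~\ref{prop:cf}. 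Once these regularity issues are dispatched, the theorem is essentially a direct substitution.
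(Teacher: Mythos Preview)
Your proposal is correct and follows exactly the same approach as the paper: the paper's own proof consists solely of the sentence ``The result follows from the Gil-Pelaez theorem,'' so your conditioning on $R_0$ (and $\Theta_0$ for the IU), factorization of the CF via conditional independence of $S_0$ and $I_0$, and subsequent unconditioning are precisely the steps the authors leave implicit. If anything, you supply more justification (Slivnyak, Fubini, regularity of the integrand) than the paper does.
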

\begin{proof}
    The result follows from the Gil-Pelaez theorem \cite{gil-pelaez}.
\end{proof}
It is worth noting that the CDF of EMFE for a random user is given by setting $\phi_E^{RU}(q;d|r_0)\! = \!\phi_{I}(q|r_e)$ in Theorem~\ref{th:exp_cdf}. The calculation of the expression of the CF of interference differs from the classical method. Instead of applying the expectation operators over $\xi_i$ and $h_i$ first, and then over $r_i$, the method used in this paper, inspired by \cite{yu2017}, first applies the operator over $r_i$ and then over the other variables. This latter approach offers the advantage of easier generalization to various gain and fading models, leveraging knowledge of the moments of $G(\xi)$ and $|h|^2$.

\subsection{Coverage}
\label{ssec:cov_cdf}
The CCDF of the SINR has been derived for models with various features, including DBF with the various described antenna pattern. The difference in this paper lies in the CFs of the signal and interference, whose expression was given in Subsection~\ref{ssec:prelim}. It is therefore given as such in Lemma~\ref{lem:cov}.
\begin{lemma}\label{lem:cov}
    The CCDF of the SINR, for the propagation model \eqref{eq:model} in a H-PPP, is given by
{\smalltonormalsize
\begin{multline*}\label{eq:coveq}
        F_{\text{cov}}(T_c) \triangleq \mathbb E_0\left[\mathbb{P}\left[\text{\normalfont SINR}_0 > T_c\right]\right] \\
        =\int_{r_0}\left(\frac{1}{2} + \int_{0}^{\infty} \text{\normalfont Im}\left[\phi_{\text{\normalfont{SINR}}}(q, T_c|r_0)\right]\,\frac{1}{\pi q} dq \right)f_{R_0}(r_0) dr_0
\end{multline*}}where $\phi_{\text{\normalfont{SINR}}}(q, T_c|r_0) = \phi_S(q;0|r_0)  \phi_I(-T_c q|r_0) e^{-jT_cq\sigma^2}$ and $T_c$ is the SINR threshold.
\end{lemma}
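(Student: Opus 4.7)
The plan is to recast $\mathbb{P}[\text{SINR}_0 > T_c \mid R_0 = r_0]$ as a tail probability of a single real-valued random variable and then invoke the Gil-Pelaez inversion formula, exactly as was done for the EMFE in Theorem~\ref{th:exp_cdf}. Write the coverage event as $S_0 > T_c(I_0 + \sigma^2)$, i.e.\ $Y > 0$ where $Y = S_0 - T_c I_0 - T_c \sigma^2$. Conditioned on the location of the serving BS (hence on $R_0$), the useful signal $S_0 = \bar{P}_r(R_0)|h_0|^2$ and the aggregate interference $I_0$ are independent: they are built from the disjoint collections $\{h_0\}$ and $\{X_i, \xi_i, h_i\}_{i\neq 0}$, with the latter governed by the Palm distribution of the PPP on $\Psi\setminus\{X_0\}$.

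Next I would compute the characteristic function of $Y$ given $r_0$. Using the conditional independence just mentioned and the fact that the thermal noise term is deterministic,
\begin{equation*}
\phi_Y(q\mid r_0) = \mathbb{E}\!\left[e^{jqS_0}\mid r_0\right]\,\mathbb{E}\!\left[e^{-jqT_c I_0}\mid r_0\right]\,e^{-jqT_c\sigma^2}.
\end{equation*}
The first factor is $\phi_S(q;0\mid r_0)$ from Proposition~\ref{prop:cf_S} (equation~\eqref{eq:cfS0}), while the second factor is obtained by evaluating the interference CF of Proposition~\ref{prop:cf} at argument $-T_c q$, yielding $\phi_I(-T_c q\mid r_0)$. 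Hence $\phi_Y(q\mid r_0) = \phi_{\text{SINR}}(q, T_c \mid r_0)$ as claimed.

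Now I would apply the Gil-Pelaez theorem \cite{gil-pelaez} to $Y$, which gives, for a continuous distribution at $0$,
\begin{equation*}
\mathbb{P}[Y > 0 \mid r_0] = \frac{1}{2} + \frac{1}{\pi}\int_0^\infty \frac{\textrm{Im}\left[\phi_Y(q\mid r_0)\right]}{q}\,dq,
\end{equation*}
which is exactly the bracketed expression in the statement. The announced formula for $F_{\text{cov}}(T_c)$ then follows by deconditioning on $R_0$ with the density $f_{R_0}$ given in \eqref{eq:pdf_R0}.

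The routine parts are the algebraic substitution and the interchange of expectations; the delicate step is the independence argument between $S_0$ and $I_0$ given $R_0$, since $R_0$ is itself derived from the PPP $\Psi$. This is handled by Slivnyak's theorem: conditioning on the nearest atom being at distance $r_0$ leaves the remaining atoms distributed as an independent PPP of intensity $\lambda$ outside the disk of radius $r_0$, which decouples $|h_0|^2$ from the interferer marks. A minor technical point is also to check absolute integrability of $\textrm{Im}[\phi_Y(q\mid r_0)]/q$ near $q=0$ (which holds because $Y$ has finite mean) so that the Gil-Pelaez integral and the subsequent integral over $r_0$ may be exchanged by Fubini.
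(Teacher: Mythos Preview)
Your proof is correct and follows exactly the approach the paper relies on: the paper does not spell out a proof for this lemma but presents it as a known consequence of the Gil-Pelaez inversion (the same tool invoked in the one-line proof of Theorem~\ref{th:exp_cdf}), with only the characteristic functions $\phi_S$ and $\phi_I$ being new. Your explicit reduction to $Y=S_0-T_cI_0-T_c\sigma^2>0$ and the Slivnyak-based independence argument simply make the implicit steps precise.
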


\subsection{SCAIU}
\label{ssec:joint_cdf}
In this subsection, the SCAIU, spatial CDF jointly analyzing the SINR experienced by the AU and the EMFE experienced by the IU, is calculated and given in Theorem~\ref{th:joint1}. The conditional CDF and bounds are then provided.
\begin{theorem}\label{th:joint1}
The joint CDF of the SINR of the AU and the EMFE of the IU, the two users being separated by a distance $d$, for the propagation model \eqref{eq:model} in a H-PPP, is
\begin{align}\begin{split}
    &\mathcal{J}(T_{c}, T_{e};d) = \mathbb E_0\left[\mathbb{P}\left[\textrm{\normalfont SINR}_0 > T_{c}, \mathcal{P}(d) < T_{e}\right]\right]\\
    &\quad= \frac{-1}{4} + \frac{1}{2}\, F_{\textrm{cov}}(T_{c}) + \frac{1}{2}\, F^{IU}_{\textrm{EMFE}}(T_{e}) - \frac{1}{\pi^2}\, \Upsilon(T_{c}, T_{e}; d)
\end{split}\end{align}
where
{\smalltonormalsize
\begin{align}\begin{split}
    &\Upsilon(T_{c}, T_{e}; d) = \frac{1}{2\pi}\int_0^{\tau}\int_0^{2\pi}\Upsilon(T_{c}, T_{e}; d| r_0, \theta_0)\,d\theta \,f_{R_0}(r_0)\,dr_0,\\
    &\Upsilon(T_{c}, T_{e}; d| r_0, \theta_0) = \int_0^{\infty} \!\int_0^{\infty} \!\frac{\epsilon(q, q';T_{c},T_{e},d|r_0, \theta_0)}{q\, q'}\,dq\, dq',\\
    &\epsilon(q, q';T_{c},T_{e},d|r_0, \theta_0) \\
    &\quad \! = \! \frac{1}{2}\,\textrm{\normalfont Re}\left[\epsilon_+(q, q';T_{c},T_{e},d|r_0, \theta_0)-\epsilon_-(q, q';T_{c},T_{e},d|r_0, \theta_0)\right],\\
    &\epsilon_\pm(q, q';T_{c},T_{e},d|r_0, \theta_0) \\
    &\quad= \gamma_\pm(-q T_{c}, q'|r_0)\phi_{S}(q; 0 |\Psi)\phi_{S}(\pm q'; d|\Psi)\, e^{-j (q T_{c}\sigma^2 \pm q' T_{e})}
\end{split}\end{align}}
and $\gamma_\pm(q, q'|r_0)$ is given in~\eqref{eq:gamma} at the top of the next page.
    \begin{figure*}[!h]
{\footnotesize
\begin{multline}\label{eq:gamma}
    \gamma_\pm(q, q'|r_0) = \exp\left(\frac{12 \lambda}{N}\Bigg[(1+k_{\textrm{\normalfont max}})\left(\frac{3(1+k_{\textrm{\normalfont max}})}{N\pi}-1\right)r^2\right.\\    \left.-\sum_{k = 0}^{k_{\textrm{\normalfont max}}}\!\sum_{l = 0}^{m-1}\!\mybinom[0.7]{-1/2}{l}\mybinom[0.7]{m-1}{l}\left(\frac{6}{N\pi}\!\sum_{p = 0}^{k_{\textrm{\normalfont max}}} \!\sum_{l' = 0}^{m-1} \!\mybinom[0.7]{-1/2}{l'}\mybinom[0.7]{m-1}{l'} \frac{r^{2+\alpha(2m-1)}  F_1\!\left(2m\!-\!1\!-\!l\!-\!l'\!+\!\frac{2}{\alpha}, \frac{2m-1}{2}, \frac{2m-1}{2}, 2m\!-\!l-l'\!+\!\frac{2}{\alpha}, \frac{-j m}{ q \chi_k \Bar{P}_r(r)}, \frac{\mp j m}{ q' \chi_k \Bar{P}_r(r)}\!\right)}{\left(2+\alpha(2m-1-l-l')\right)( j q \chi_k \Bar{P}_r(r)/m)^{m-l-\frac{1}{2}}(\pm  j q' \chi_k \Bar{P}_r(r)/m)^{m-l'-\frac{1}{2}}}  \right. \right.\\
    \left.\left.\!+ \!\frac{2{(r^2+z^2)\!\left(1\!-\!\frac{6(1+k_{\textrm{\normalfont max}})}{N\pi}\right)}}{\left(4+\alpha(2m-2l-1)\right)}
    \!\left(\!
    \frac{{}_2F_1\!\left(1, \scalebox{0.9}{1-\textit{l}+}\frac{2}{\alpha}; \frac{1}{2}\scalebox{0.9}{-\textit{l+m}+}\frac{2}{\alpha}; \frac{-j m}{ q \chi_k \Bar{P}_r(r)}\right)}{\left( j q \chi_k \Bar{P}_r(r)/m\right)^{1-l} \left(1- j q \chi_k \Bar{P}_r(r)/m\right)^{m-\frac{3}{2}}}\!+\!\frac{{}_2F_1\!\left(1, \scalebox{0.9}{1-\textit{l}+}\frac{2}{\alpha}; \frac{1}{2}\scalebox{0.9}{-\textit{l+m}+}\frac{2}{\alpha}; \frac{\mp j m}{ q' \chi_k \Bar{P}_r(r)}\right)}{ \left(\pm j q' \chi_k \Bar{P}_r(r)/m\right)^{1-l}  \left(1\mp  j q' \chi_k \Bar{P}_r(r)/m\right)^{m-\frac{3}{2}}}\right)
    \right)\!\Bigg]_{r = r_0}^{r = \tau}\!\right)
    \end{multline}
}

\hrulefill
\end{figure*}
\end{theorem}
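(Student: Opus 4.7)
The plan is as follows. Define the auxiliary random variables $U \coloneqq S_0 - T_c(I_0 + \sigma^2)$ and $V \coloneqq \mathcal{P}(d)$, so that $\mathcal{J}(T_c, T_e; d) = \mathbb{P}[U > 0,\, V < T_e]$. I will invoke the bivariate Gil--Pelaez inversion formula \cite{gil-pelaez}, which writes a joint probability of this type as a constant $-1/4$, plus one half of each marginal probability $\mathbb{P}[U > 0]$ and $\mathbb{P}[V < T_e]$, minus a double principal-value integral of the form $\frac{1}{2\pi^2}\iint \frac{1}{q q'}\,\mathrm{Re}\!\left[\phi_{U,V}(q,q') - \phi_{U,-V}(q,q')\right] dq\, dq'$ suitably modulated by $e^{-jqT_c\sigma^2 \mp jq'T_e}$. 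The two marginal pieces are identified with Lemma~\ref{lem:cov} and Theorem~\ref{th:exp_cdf}, producing the $\tfrac{1}{2}F_{\mathrm{cov}}(T_c)$ and $\tfrac{1}{2}F^{IU}_{\mathrm{EMFE}}(T_e)$ contributions, while the remaining double integral will be matched with $-\Upsilon(T_c, T_e; d)/\pi^2$.

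The next step is to compute the joint characteristic function $\phi_{U, \pm V}(q, q')$. Conditioning on the PPP $\Psi$ and on the polar location $(R_0, \theta_0)$ of the serving BS, the fading gains $\{|h_i|^2\}$ and interferer beam angles $\{\xi_i\}_{i\neq 0}$ are mutually independent, so the CF factorises into a serving-BS term and a product over interferers. The serving-BS factor is obtained directly from the Nakagami-$m$ MGF since $\xi_0 = 0$, and yields precisely $\phi_S(q; 0|\Psi)\,\phi_S(\pm q'; d|\Psi)$ via Proposition~\ref{prop:cf_S}. The interferer product is handled through the PGFL of the H-PPP on $\mathcal{B}\setminus B(0, r_0)$; using the approximation $d \ll \tau$ and $d \ll (2\sqrt{\lambda})^{-1}$ (justified in the paragraph preceding Proposition~\ref{prop:cf}), one replaces $W_i$ by $R_i$ in the interferer path losses, which decouples the radial and angular integrations and leaves the coupling between the two ``channels'' entirely inside the cross-gain term $G(\xi)\,G(\xi + \delta)$.

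The principal obstacle is then the closed-form evaluation of this PGFL exponent for the multi-cosine pattern, i.e.\ the identification of $\gamma_\pm$ in~\eqref{eq:gamma}. My approach is to (i) exploit the piecewise definition of $G_{mc}$ to decompose the angular integral into a finite sum over the main lobe and the $k_{\max}$ side lobes indexed by $k$, each contributing a factor $\chi_k$ and a $\sin^2$/$\cos^2$ envelope; (ii) expand the two coupled gamma-type factors $(1 - j q \chi_k \Bar{P}_r(r) t/m)^{-m}$ and $(1 \mp j q' \chi_k \Bar{P}_r(r) t/m)^{-m}$ in $t = |h|^2$ as a product of binomial series whose coefficients involve $\binom{-1/2}{l}$ (arising from the $\sin^{-1}$-type angular change of variable used in Appendix~\ref{sec:proofcf}) and $\binom{m-1}{l}$ (from the integer-parameter gamma expansion); and (iii) perform the radial integral of $r^{\,\cdot}(r^2+z^2)^{-\cdot}$ term by term using the Euler integral representation of the Appell function $F_1$ for the coupled $(l, l')$ terms, together with Gauss's ${}_2F_1$ for the decoupled main-lobe/side-lobe complement, producing exactly the $F_1$ kernels and the two ${}_2F_1$ boundary contributions visible in~\eqref{eq:gamma}.

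Finally, resumming over $l, l', k$ and evaluating the antiderivative between the limits $r_0$ and $\tau$ yields the closed-form $\gamma_\pm(q, q' | r_0)$. Averaging over the conditioning $(R_0, \theta_0)$ with the density $f_{R_0}(r_0)$ of~\eqref{eq:pdf_R0} and the uniform law of $\theta_0$ over $[0, 2\pi)$ produces the triple integral defining $\Upsilon(T_c, T_e; d)$, which completes the identification and hence the proof.
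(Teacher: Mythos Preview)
Your decomposition into $-\tfrac14+\tfrac12 F_{\mathrm{cov}}+\tfrac12 F^{IU}_{\mathrm{EMFE}}-\tfrac{1}{\pi^2}\Upsilon$ and the subsequent identification of $\epsilon_\pm$ and $\gamma_\pm$ are correct, and the outcome matches the paper. The route, however, is organised differently. The paper does \emph{not} invoke a bivariate Gil--Pelaez formula (the reference \cite{gil-pelaez} is purely univariate). Instead, it first conditions on $\Psi$ and uses the independence of the fading realisations on the AU and IU links to \emph{factor the joint probability} as $\mathbb{E}_\Psi\!\big[\mathbb{P}[\mathrm{SINR}_0>T_c\,|\,\Psi]\,\mathbb{P}[\mathcal P(d)<T_e\,|\,\Psi]\big]$, then applies the univariate Gil--Pelaez theorem separately to each factor, multiplies out the resulting $(\tfrac12+\cdot)(\tfrac12-\cdot)$ structure, and finally pushes $\mathbb{E}_\Psi$ through. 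Your approach reaches the same place by computing the joint CF after conditioning on $\Psi$; the two are equivalent, but the paper's ordering makes the crucial conditional-independence step explicit rather than hiding it inside a ``bivariate inversion'' black box.

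One point deserves care. You write that the approximation $d\ll(2\sqrt\lambda)^{-1}$ ``leaves the coupling between the two channels entirely inside the cross-gain term $G(\xi)\,G(\xi+\delta)$''. That is the physically correct model, but it is \emph{not} the model underlying \eqref{eq:gamma}: the paper's assumption that the interference CF at the IU equals that at the AU (stated just before Proposition~\ref{prop:cf}) amounts to treating the interferer beam angles on the two links as \emph{independent} draws, so that each interferer contributes a product $\mathbb{E}_{\xi}[\cdot]\,\mathbb{E}_{\xi'}[\cdot]$ rather than $\mathbb{E}_\xi[G(\xi)G(\xi+\delta)\,\cdot]$. This is precisely what produces the decoupled $\chi_k$ structure and the separate ${}_2F_1$ tails in \eqref{eq:gamma}. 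If you carried through the shared-$\xi$ coupling you describe, you would obtain a different $\gamma_\pm$. Your subsequent sketch (binomial expansions in $\binom{-1/2}{l}\binom{m-1}{l}$, Appell $F_1$ for the coupled radial integral, ${}_2F_1$ for the residual) is otherwise consistent with the target expression and is in fact more detailed than the paper, which delegates the evaluation of $\gamma_\pm$ to Appendix~C of \cite{GontierMeta}.
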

\begin{proof}
    The proof is provided in Appendix~\ref{sec:proofjoint1}.
\end{proof}

For the ease of analysis, we propose in Lemma~\ref{lem:cond_joint} the SCAIU conditioned on the AU's SINR. This metric allows one to answer questions such as "At places where coverage requirements are met, what is the probability to experience EMFE below $T_e$ at $x$ meters from an AU ?"
\begin{lemma}\label{lem:cond_joint}
    The joint CDF of the AU's SINR and IU's EMFE, conditioned on the AU's SINR being above threshold $T_c$ is
    \begin{equation}\label{eq:cond_joint}
        \mathcal{H}(T_{e}|T_{c};d) = \frac{\mathcal{J}(T_c, T_e;d)}{F_{\textrm{ cov}}\left(T_{c}\right)}.
    \end{equation}
\end{lemma}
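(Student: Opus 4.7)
The identity is a direct consequence of the definition of conditional probability, so the proof amounts to unpacking the notation in the statement and matching each term to a previously established expression. No new stochastic-geometric machinery beyond Theorem~\ref{th:joint1} and Lemma~\ref{lem:cov} is required.

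First, I would write out what $\mathcal{H}(T_e\mid T_c; d)$ means operationally: it is the probability that the IU's EMFE $\mathcal{P}(d)$ falls below $T_e$ given that the AU's SINR exceeds $T_c$, with the expectation taken over all sources of randomness (the PPP $\Psi$, the beam orientations $\{\xi_i\}$, and the Nakagami-$m$ fading $\{|h_i|^2\}$). Applying the elementary relation $\mathbb{P}[B\mid A] = \mathbb{P}[A\cap B]/\mathbb{P}[A]$ with $A = \{\textrm{SINR}_0 > T_c\}$ and $B = \{\mathcal{P}(d) < T_e\}$ gives
\[
\mathcal{H}(T_e\mid T_c; d) \;=\; \frac{\mathbb{P}\!\left[\textrm{SINR}_0 > T_c,\; \mathcal{P}(d) < T_e\right]}{\mathbb{P}\!\left[\textrm{SINR}_0 > T_c\right]}.
\]

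Second, I would identify the two factors with the metrics already derived in the paper. By Theorem~\ref{th:joint1}, the numerator is exactly $\mathcal{J}(T_c, T_e; d)$, since that theorem expresses the joint probability on the right-hand side in closed form via the two-dimensional Gil-Pelaez inversion. By Lemma~\ref{lem:cov}, the denominator is $F_{\textrm{cov}}(T_c)$. Substituting yields \eqref{eq:cond_joint}.

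The only subtlety worth flagging, and what I would view as the \emph{main obstacle}, is interpretational rather than computational: the quantity $\mathcal{H}$ must be understood as the ratio of two already-averaged probabilities (a global conditional probability) rather than as an average over $R_0$ of the conditional probabilities given $R_0$. Since $\mathcal{J}$ and $F_{\textrm{cov}}$ share the same conditioning structure and the same averaging over $f_{R_0}(r_0)$, the ratio-of-averages is the natural SCAIU-style metric and is consistent with the motivating question stated just before the lemma. Once this convention is fixed, the proof reduces to a single line invoking the definition of conditional probability.
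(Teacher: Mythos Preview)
Your proposal is correct and matches the paper's approach exactly: the paper's proof is the single sentence ``This metric is obtained from Bayes' rule,'' and you have simply expanded that into the identification of the numerator with $\mathcal{J}(T_c,T_e;d)$ from Theorem~\ref{th:joint1} and the denominator with $F_{\textrm{cov}}(T_c)$ from Lemma~\ref{lem:cov}. Your remark about the ratio-of-averages interpretation is a useful clarification that the paper leaves implicit, but it does not change the argument.
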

\begin{proof}
    This metric is obtained from Bayes' rule.
\end{proof}

It is worth noting that the SCAIU conditioned on the IU's EMFE, $\mathcal{H}(T_{c}|T_{e};d)$could also be explored to address the question: "What is the best possible coverage that adheres to the IU's EMFE constraint?". Simple and tight bounds for $\mathcal{J}(T_{c}, T_{e};d)$ are the Fréchet lower bound (FLB) and the Fréchet upper bound (FUB), which are given in Lemma~\ref{lem:frechet}.
\begin{lemma}\label{lem:frechet} The expression of $\mathcal{J}(T_{c}, T_{e};d)$ in Theorem~\ref{th:joint1} is bounded by the Fréchet bounds
\begin{equation}
   \text{FLB} = \max\left(0, F_{\textrm{cov}}(T_{c}^{u}) + F_{\textrm{EMFE}}^{IU}(T_{e}) - 1\right)
\end{equation}and
\begin{equation}
   \text{FUB} = \min\left(F_{\textrm{cov}}(T_{c}^{u}), F_{\textrm{EMFE}}^{IU}(T_{e})\right) 
\end{equation}
such that $\text{FLB} \leq \mathcal{J}(T_{c}, T_{e};d) \leq \text{FUB}$.
\end{lemma}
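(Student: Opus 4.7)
The plan is to recognize that $\mathcal{J}(T_{c}, T_{e}; d)$ is simply the probability of an intersection of two events, and then invoke the classical Fréchet--Hoeffding bounds, which hold for any bivariate distribution regardless of the underlying dependence structure.

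First, I would fix the notation. Let
\[
A = \{\textrm{SINR}_0 > T_{c}\}, \qquad B = \{\mathcal{P}(d) < T_{e}\},
\]
so that $\mathcal{J}(T_{c}, T_{e}; d) = \mathbb{P}(A \cap B)$, $\mathbb{P}(A) = F_{\textrm{cov}}(T_{c})$, and $\mathbb{P}(B) = F_{\textrm{EMFE}}^{IU}(T_{e})$. Both marginals are already computed in Lemma~\ref{lem:cov} and Theorem~\ref{th:exp_cdf}, so no new distributional calculation is needed.

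For the upper bound, I would use the monotonicity of measure: since $A \cap B \subseteq A$ and $A \cap B \subseteq B$, we have $\mathbb{P}(A \cap B) \le \min(\mathbb{P}(A), \mathbb{P}(B))$, which is exactly the FUB. For the lower bound, I would apply the inclusion--exclusion identity
\[
\mathbb{P}(A \cap B) = \mathbb{P}(A) + \mathbb{P}(B) - \mathbb{P}(A \cup B),
\]
together with $\mathbb{P}(A \cup B) \le 1$, to obtain $\mathbb{P}(A \cap B) \ge \mathbb{P}(A) + \mathbb{P}(B) - 1$; combining with the trivial lower bound $\mathbb{P}(A \cap B) \ge 0$ yields the FLB. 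Sharpness of these bounds as Fréchet bounds is standard and need not be re-derived here.

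There is no real obstacle: the proof is a two-line invocation of elementary probability axioms applied to the already-identified events $A$ and $B$. The only thing worth commenting on is that the bounds hold \emph{pointwise} in $T_c$ and $T_e$ and are independent of the spatial separation $d$, the antenna pattern, and the PPP structure, since the dependence between $\textrm{SINR}_0$ and $\mathcal{P}(d)$ (induced by the shared point process $\Psi$, the shared serving BS $X_0$, and the shared fading variables $\{|h_i|^2\}$) is absorbed into the joint law to which the Fréchet bounds apply universally.
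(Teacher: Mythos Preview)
Your proposal is correct and is the standard derivation of the Fr\'echet inequalities; the paper itself states the lemma without proof, treating these bounds as classical, so your two-line argument via monotonicity and inclusion--exclusion is precisely what the paper implicitly relies on.
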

Lemma~\ref{lem:frechet} can easily be adapted to obtain bounds on the conditional CDF~\eqref{eq:cond_joint} by dividing all members of the inequality by~$F_{\textrm{EMFE}}\left(T_{e}\right)$.

\section{Numerical Results}

In this section, the metrics derived in Section~\ref{sec:analytical_results} are analyzed using the values of network parameters listed in Table~\ref{tab:sim_param}. The bandwidth of 20~MHz is a typical value for the 3.5~GHz frequency band. Based on propagation models used in similar cellular networks \cite{GontierAccess}, we set the path loss exponent to $\alpha\! =\! 3.25$. The noise power is $\sigma^2 \!= \!10 \log_{10}(k\,T_0\,B_w) + 30 + \mathcal{F}_{dB}$ in dBm where $k$ is the Boltzmann constant, $T_0$ is the standard temperature (290~K), $B_w$ is the bandwidth and $\mathcal{F}_{dB} = \numprint[dB]{6}$ is the receiver noise figure \cite{10.5555/3294673}. The value of $k_{\textrm{max}} = 10$ was chosen to model a sufficient number of side lobes. Beyond this number, the side lobe power drops below one thousandth of the main lobe power, and the lobes no longer align with those of the theoretical pattern.

\begin{table}[h!]
    \begin{center}
    \caption{\label{tab:sim_param} Simulation parameters}
    \begin{tabular}{|c|c||c|c|} 
     \hline
     $f$ & \numprint[GHz]{3.5} & $\tau$ & \numprint[km]{3}\\ 
     $B_w$ & \numprint[MHz]{20} & $z$ & \numprint[m]{30}\\ 
     $r_e$ & \numprint[m]{0.3} & $P_t$ & \numprint[dBm]{48}\\
     $\lambda$ & \numprint[BS/km^2]{10} & $N$ & 64\\
     $\sigma^2$ & \numprint[dBm]{-95.40} & $\alpha$ & 3.25\\ 
     $k_{\textrm{max}}$ & 10 &  & \\ 
     \hline
    \end{tabular}
    \end{center}
    
\end{table}

\subsection{Comparison of Antenna Patterns}

The marginal distribution of EMFE for an IU located 10~m from the AU with $N \! = \! 64$ is illustrated in Fig.~\ref{fig:Comparison_patterns_expIU} for the different antenna patterns introduced in Subsection~\ref{ssec:gain}. Markers in the figure represent numerical values obtained from Theorem~\ref{th:exp_cdf}, while the solid line for the theoretical antenna model is derived from Monte Carlo simulations (MCSs). The close alignment between the CDFs obtained using the theoretical and multi-cosine antenna patterns validates the mathematical model. Comparatively, conventional antenna patterns from the literature, such as cosine, flat-top or Gaussian, exhibit larger errors of 36\%, 12 \% and 8\%, respectively, contrasting with an error of 2\% with the multi-cosine pattern. 


The SCAIU is showed again with a solid line in Fig.~\ref{fig:Comparison_patterns_expIU_S_and_I}. The CDF of the signal power from the AU's serving BS is represented by a dashed line, while the CDF of the interference power from all other BSs is depicted with a dash-dotted line. High EMFE values experienced by the IU primarily originate from the main lobe of the AU's serving BS. However, when the IU is not illuminated by this main lobe, the main contribution to its EMFE does not necessarily come from the side lobes of the AU's serving BS. This is evident because the CDF of signal power and total EMFE are not close for low EMFE values. Instead, the main lobes of other neighboring BSs contribute significantly, with side lobes of interfering BSs contributing to a lesser extent. This indicates that assessing EMFE solely from the main lobe of the closest BS is insufficient for a comprehensive characterization of EMFE. These observations are highly dependent on the values of $N$ and $d$, which are analyzed in Subsection~\ref{ssec:impact_parameters}.

Furthermore, the CCDF of SINR obtained in Lemma~\ref{lem:cov} is illustrated in Fig.~\ref{fig:Comparison_patterns_cov} for different antenna patterns, for $N \! = \! 64$ and $d \! = \! \numprint[m]{10}.$ 
The multi-cosine pattern achieves an error of less than 1\%, whereas the cosine, flat-top and Gaussian patterns show errors of 9\%, 15\% and 18\% respectively. Notably, the network scenario is not noise-limited, evident from the comparison between SINR and signal-to-noise ratio (SNR) CCDFs.

\begin{figure*}
\centering
\begin{minipage}{.32\linewidth}
  \centering
\includegraphics[width=0.95\linewidth, trim={5cm, 9cm, 5.9cm, 9cm}, clip]{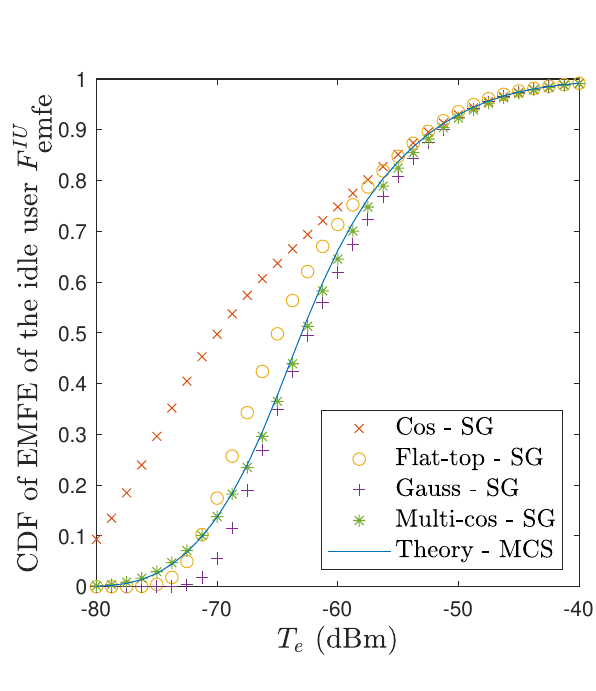}
\caption{CDF of IU's EMFE for different antenna patterns. $N \! = \! 64$, $d \! = \! \numprint[m]{10}$.}
\label{fig:Comparison_patterns_expIU}
\end{minipage}%
\hspace{0.cm}
\begin{minipage}{.32\linewidth}
\includegraphics[width=0.95\linewidth, trim={5cm, 9cm, 5.9cm, 9cm}, clip]{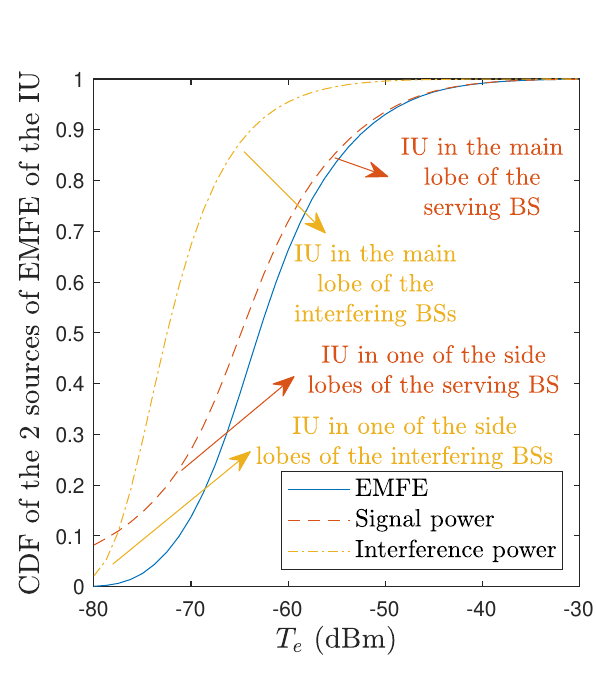} 
\caption{CDF of signal and interference power contributing to the total IU's EMFE. $N \! = \! 64$, $d \! = \! \numprint[m]{10}$.} 
\label{fig:Comparison_patterns_expIU_S_and_I}
\end{minipage}%
\hspace{0.cm}
\begin{minipage}{.32\linewidth}
    \centering
    \includegraphics[width=0.95\linewidth, trim={5cm, 9cm, 5.9cm, 9cm}, clip]{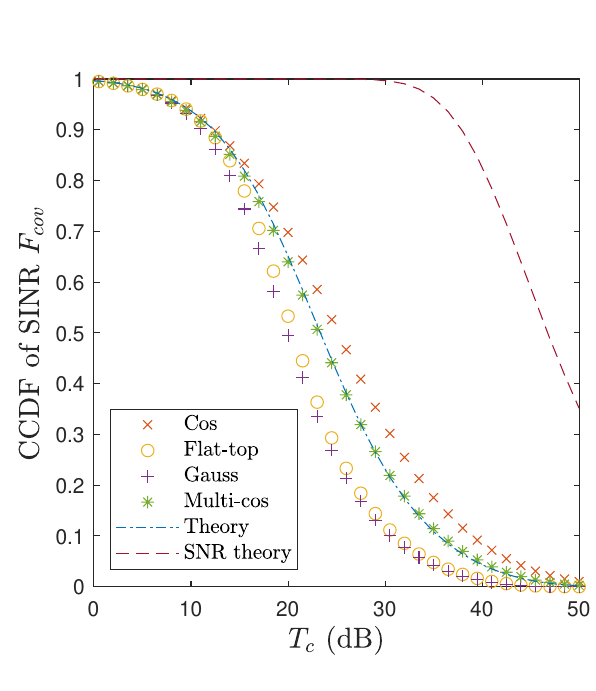}
    \caption{CCDF of SINR for different antenna patterns and zoom on the CCDF. $N \! = \! 64$, $d \! = \! \numprint[m]{10}$.}
    \label{fig:Comparison_patterns_cov}
\end{minipage}
\end{figure*}

\subsection{SCAIU}
\label{ssec:joint_analysis}
The SCAIU is depicted in Fig.~\ref{fig:BF_cond} for different values of $T_e$ and $d$, conditioned on an AU's SINR of 10~dB and $N \! = \! 64$. Markers are derived from Lemma~\ref{lem:cond_joint} while solid lines are obtained from MCSs. The proximity between the curves substantiates the accuracy of the mathematical expressions. Dash-dotted lines and dashed lines represent the FLB and FUB, respectively, given in Lemma~\ref{lem:frechet}. The Fréchet bounds provide an accurate approximation of the conditional CDF, aligning closely to the FUB at low EMFE limits and approaching the FLB at high ones. Observations reveal that $\mathcal{H}$ increases slowly with increasing $d$ when $T_e = \numprint[dBm]{-70}$, while it rises rapidly for higher EMFE limits. This behavior stems from the increased likelihood that the primary factor preventing the IU from remaining below the EMFE limit as it moves away is the main lobe of the AU's serving base station. This influence is depicted by the dotted line in Fig.~\ref{fig:BF_cond}, delineating the zone of influence of this lobe. At a distance of 10~m from the AU, this probability diminishes to 15\%. Consequently, at this distance, for $T_e = \numprint[dBm]{-55}$ or higher limits, $\mathcal{H}$ exceeds 82\% while it is only 16\% for $T_e = \numprint[dBm]{-40}$. This indicates that not only the main lobe of the AU's serving BS but also the side lobes and the lobes of other BSs prevent the IU's EMFE from falling below the limit. The conditional probability continues to rise as the IU moves away from the AU but eventually reaches a threshold. This threshold is reached sooner for high EMFE limits compared to lower ones. For $T_e = \numprint[dBm]{-70}$, this occurs around 60~m, with the probability that the EMFE caused by the main lobe of the AU's serving BS predominates being approximately 2\% at this distance.

\begin{figure}
\centering
\includegraphics[width=0.9\linewidth, trim={3cm, 9cm, 3cm, 9.5cm}, clip]{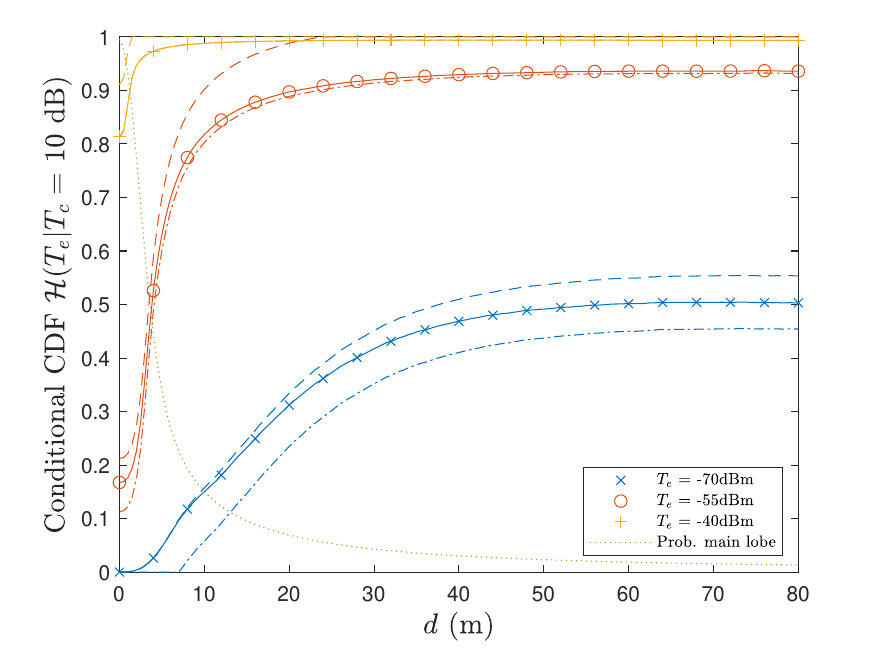}
\caption{SCAIU conditioned on an AU's SINR of 10~dB and $N \! = \! 64$. Markers are obtained from Lemma~\ref{lem:cond_joint} and solid lines from MCSs. The dash-dotted lines and dashed lines represent the FLB and FUB, respectively.}
\label{fig:BF_cond}
\end{figure}

\subsection{Impact of Number of Antenna Elements and Distance from the AU}
\label{ssec:impact_parameters}
The parameters $N$ and $d$ significantly influence the EMFE experienced by the IU, directly impacting the width of the main lobe and the  height and width of side lobes. Fig.~\ref{fig:BF_N_d_exp_countour} illustrates the probability $F^{IU}_{\textrm{EMFE}}(\numprint[dBm]{-40}) = x$ for fixed values $x$, across various values of $d$ and $N$. For example, consider $x = 0.95$. The contour plot indicates the pair of parameters $(d, N)$ where the median IU's EMFE is $\numprint[dBm]{-40}$. When the IU is very close to the AU, the main lobe should exhibit a highly directive beam. At 1~m, $N$ must exceed 128. At 10~m, $N \! = \! 16$ is enough and $N \! = \! 8$ is adequate at 30~m. Although these values of $N$ appear high, it is important to note the use of very strict limits for the needs of the analysis. According to the relationships in \cite{GontierTWC}, $T_e = \numprint[dBm]{-40}$ corresponds to $\numprint[\mu W/m^2]{171}$ or $\numprint[mV/m]{25}$. These thresholds are relatively low, reflecting a deliberate focus on limiting the EMFE of a single operator within a specific frequency band. Thus, these values cannot be directly compared with regulatory limits, which consider the total EMFE across all frequency bands and operators. Additionally, these lower values are justified by the fact that this study investigates the EMFE experienced by IUs, in contrast to most other studies in the literature, which primarily focus on the EMFE experienced by AUs.

\begin{figure}
\centering
\includegraphics[width=0.9\linewidth, trim={3cm, 9cm, 3cm, 9.5cm}, clip]{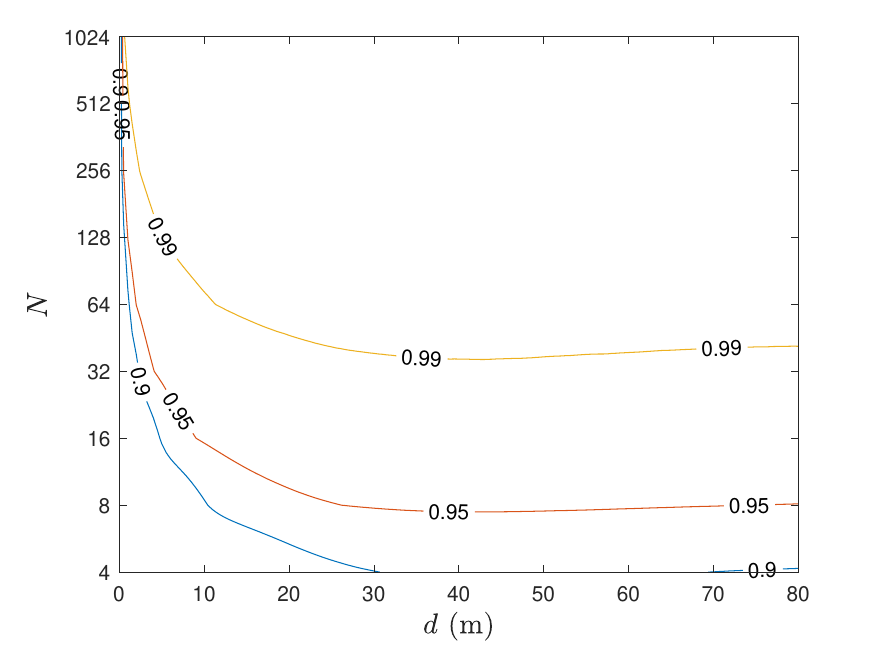}
\caption{Contour plot of $F^{IU}_{\textrm{EMFE}}(\numprint[dBm]{-40}) = x$ for various values of $N$, $d$ and $x$}
\label{fig:BF_N_d_exp_countour}
\end{figure}
A similar analysis can be performed for the SCAIU, as depicted in Fig.~\ref{fig:BF_N_d_joint_countour} with $T_c = \numprint[dB]{10}$ and $T_e = \numprint[dBm]{-40}$. Consider a potential future scenario that network providers might need to adhere to. For instance, suppose network providers must ensure that the EMFE of IUs located 2~m from an AU remains below -40~dBm at least 95\% of the time, while also maintaining the AU's SINR above 10~dB. According to Fig.~\ref{fig:BF_N_d_joint_countour}, this requirement can be met by employing directive antennas with at least 256~elements. 
\begin{figure}
\centering
\includegraphics[width=0.9\linewidth, trim={3cm, 9cm, 3cm, 9.5cm}, clip]{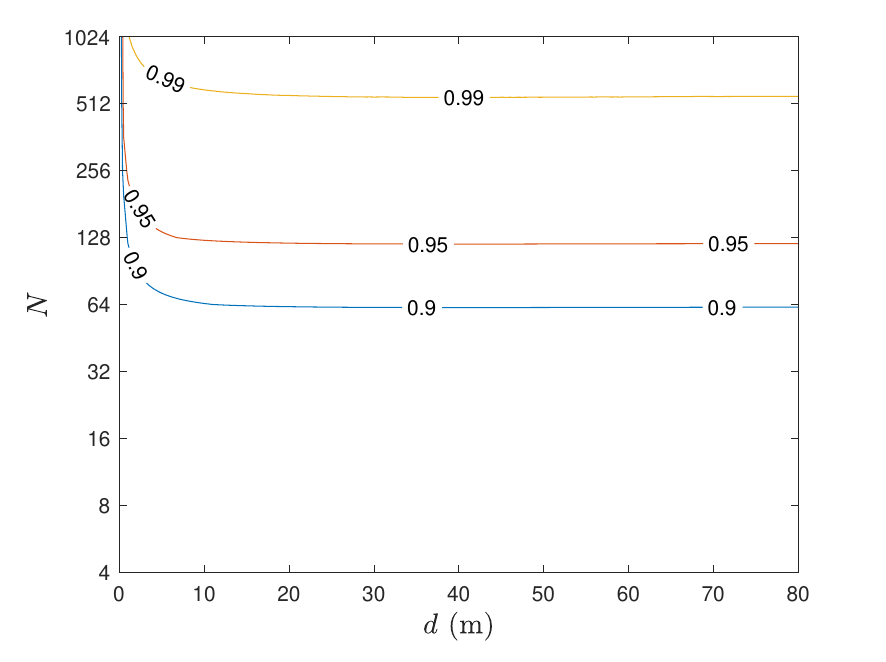}
\caption{Contour plot of the SCAIU for various values of $N$ and $d$, with $T_c = \numprint[dB]{10}$ and $T_e = \numprint[dBm]{-40}$.}
\label{fig:BF_N_d_joint_countour}
\end{figure}

\section{Conclusion}
\label{sec:conclusion}
In this paper, a marginal and joint analysis of the EMFE of an IU and the coverage experienced by an AU, located at a specific distance from the IU, was conducted. Due to the absence of sufficiently accurate and tractable antenna pattern models in the literature, a novel model, the multi-cosine model, was introduced. This model is significantly closer to the theoretical antenna pattern. The model presented in this paper addresses various questions regarding the IU's EMFE as a function of the number of antenna elements at the BS and the distance to the closest AU through the introduction of a new metric, the SCAIU. 

{\appendices

\section{Proof of the CF of the useful signal seen from the IU}
\label{sec:proofcfS0}
If the angle $\delta_0$ is larger than 60$^\circ$, the AU and the IU are not located in the same sector. The beam of the serving BS launched towards the AU has therefore no impact on the IU's EMFE. The CF of the useful signal from the point of view of the IU, conditioned on the distance location of the serving BS, should therefore consider these two cases in its definition:
\begin{equation}\label{eq:S_idle_proof}
\begin{split}
    \phi_S(q;d|X_0) = \mathbb E_{|h|}\left[e^{jq \Bar{P}_r(W_0) G(\delta_0) |h|^2} \mathds 1\left[|\delta_0| \leq \pi/3\right] \right]\\
    +\mathbb E_{|h|, \xi}\left[e^{jq \Bar{P}_r(W_0) G(\xi) |h|^2}\mathds 1\left[|\delta_0| >\pi/3\right] \right].
\end{split}
\end{equation}
where $\mathds 1[\cdot]$ is the indicator function. Applying the expectation operator $\mathbb E_{|h|}[\cdot]$ on the term for the case $\mathds 1\left[|\delta_0| \leq \pi/3\right]$, we obtain
\begin{equation}\label{eq:Phi_S_nogain}
    \mathbb E_{|h|}\left[e^{jq \Bar{P}_r(W_0) G(\delta_0) |h|^2} \right] = \left(1-jq \Bar{P}_r(W_0) G(\delta_0)/m\right)^{-m}.
\end{equation}
For the term with the case $\mathds 1\left[|\delta_0| >\pi/3\right] = 1-\mathds 1\left[|\delta_0| \leq \pi/3\right]$, let us develop the Taylor series of the exponential function. Then, the infinite sum and the expectation operators can be swapped. Knowing that $\mathbb E_{|h|}\left[|h|^{2k}\right] = \frac{\Gamma(m+k)}{\Gamma(m) m^k}$, using the notation $z = jq \Bar{P}_r(W_0)$, we have
\begin{align}\label{eta_S_def}
\begin{split}
    &\mathbb E_{|h|, \xi}\left[e^{z G(\xi) |h|^2}\right] = \mathbb E_{|h|, \xi}\left[\sum\limits_{k = 0}^\infty \frac{\left(z \,G(\xi) |h|^2\right)^k}{k!}\right]\\
    &= 1+\sum\limits_{k = 1}^\infty \frac{z^k}{k!} \mathbb E_{\xi}\left[G^k(\xi)\right] \frac{\Gamma(m+k)}{\Gamma(m) m^k} := \eta_S(q|X_0).
\end{split}
\end{align}
\eqref{eq:cfSidle} is then obtained by inserting  \eqref{eq:Phi_S_nogain} and \eqref{eta_S_def} in \eqref{eq:S_idle_proof}. Then $\eta_S(q|X_0)$ can be developed separately for each gain function, using Proposition~\ref{prop:gain}:
\begin{itemize}
    \item \eqref{eq:eta_S_ft} is obtained by inversely applying the definition of the binomial series.
    \item \eqref{eq:eta_S_mc} is obtained by inversely applying the definition of the series expansion of the hypergeometric function ${}_2F_1(\cdot;\cdot;\cdot)$.
    \item The case of the Gaussian pattern is more complex. In that case, using \eqref{eq:mom_GG}, $\eta_S(q|X_0)$ is given by
\end{itemize}
    {\small
    \begin{equation}\label{eta_S_G_proof}
    \begin{split}
        &\eta_S(q|X_0) = \sum\limits_{k = 0}^\infty \left(\frac{z\,g}{m}\right)^k  \frac{\Gamma(m+k)}{\Gamma(m) k!}\\
        &+ \sum\limits_{k = 1}^\infty \frac{3}{2}\frac{z^k}{k!} \frac{\Gamma(m+k)}{\Gamma(m) m^k} \sum\limits_{p = 1}^k \! \binom{k}{p} (N-g)^p g^{k-p} \frac{\erf\left(\tfrac{\pi \sqrt{p \eta}}{3}\right)}{\sqrt{\pi p \eta}}
    \end{split}
    \end{equation}}
The first term in \eqref{eta_S_G_proof} is solved by inversely applying the definition of the binomial series. The second term can be rewritten as
\begin{align}\label{eta_S_G_proof2}
    \begin{split}
        &\sum\limits_{p=1}^\infty  \frac{3\erf\left(\tfrac{\pi \sqrt{p \eta}}{3}\right)}{2 \sqrt{\pi p \eta}} (N-g)^p \sum\limits_{k=p}^\infty \!\frac{z^k}{k!} \frac{\Gamma(m+k) }{\Gamma(m) m^k } \binom{k}{p} g^{k-p}\\
        &= \sum\limits_{p=1}^\infty  \frac{3\erf\left(\tfrac{\pi \sqrt{p \eta}}{3}\right)}{2 \sqrt{\pi p \eta}} (N-g)^p \frac{z^p}{p!}\frac{ \Gamma(m+p) }{\Gamma(m) m^p } \left(1-\frac{z}{m}\right)^{-(m+p)}
    \end{split}
\end{align}
Replacing \eqref{eta_S_G_proof2} in \eqref{eta_S_G_proof} gives \eqref{eq:eta_S_G} in Proposition \ref{prop:cf_S}.
    
\section{Proof of Proposition~\ref{prop:cf}}
\label{sec:proofcf}
The CF function of the interference of the AU at the origin is defined and commonly written as $\phi_{I}(q|r_0) = \mathbb E_{I_0}\left[\exp(jq {I_0}(0))\right]$. Following from \cite{yu2017}, and contrarily to the conventional procedure, the first step consists of taking first the expectation over the interferers' locations, by means of the probability generating functional:
\begin{equation}
\begin{split}\label{eq:phiI0first}
     &\phi_{I}(q|R_0) = \mathbb E_{I_0}\left[e^{jq {I_0}(0)}\right]\\
     &\quad \! = \! \exp\left(-{\pi\lambda} \mathbb E_{\xi, |h|}\left[2\int_{R_0}^\tau \!  \left(1- e^{j q {P}_{r}(r)} \right)r \,dr\right]\right)\\
     &\quad \! = \! \exp\Bigg(-{\pi\lambda} \Bigg(\underbrace{\tau^2-R_0^2 - \mathbb E_{\xi, |h|}\left[\int_{R_0}^\tau \!  2 e^{j q {P}_{r}(r)} r \,dr\right]}_{\eta_I(q|R_0)}\Bigg)\Bigg).
\end{split}
\end{equation} 
By using the change of variable $t \to -j q {P}_{r}(r)$ and writing $\delta = 2/\alpha$, the integral can be rewritten
\begin{equation}\label{eq:temp1}
\begin{split}
    &\int_{R_0}^\tau \!  2 e^{j q {P}_{r}(r)} r \,dr \\
    &\quad= \delta\int_{-j q {P}_{r}(R_0)}^{-j q {P}_{r}(\tau)} \!\left(j q P_t \kappa^{-1} G  |h|^2\right)^{\delta} e^{-t} (-t)^{-1-\delta} \,dt\\
    &\quad=\delta \left(-j q P_t \kappa^{-1} G |h|^2\right)^\delta\left[\Gamma(-\delta,-j q P_r(r)\right]_{r = R_0}^{r=\tau}.
\end{split}
\end{equation}
$\Gamma(a, z)$ is the upper incomplete Gamma function whose definition and expansion series are 
\begin{equation}
    \Gamma(a, z) = \int_z^{\infty} {e^{-t}}{t^{a-1}}dt = \Gamma(a) - \sum\limits_{k=0}^{\infty} \frac{(-1)^kz^{a+k}}{k! (a+k)}.
\end{equation}
Using this expansion series in \eqref{eq:temp1} and inserting it in $\eta_I(q|R_0)$ in \eqref{eq:phiI0first} and letting $\Xi(r) = j q \Bar{P}_r(r)$ gives
{\small
\begin{equation*}
    \eta_I(q|R_0)= \left[r^2+\delta\left(\tau^2+z^2\right) \sum\limits_{k=0}^{\infty} \frac{\Xi^k(r) \mathbb E_{\xi, |h|}\!\left[G^k |h|^{2k}\right]}{k! (k-\delta)}\right]_{r = R_0}^{r=\tau}\!.
\end{equation*}}
Extracting the terms $k=0$ and using $\mathbb E\left[|h|^{2k}\right] = \frac{\Gamma(m+k)}{\Gamma(m) m^k}$, gives after some simplifications
{\small
\begin{equation}\label{eta_I}
        \eta_I(q|R_0) = \delta\left[\left(r^2+z^2\right)\sum_{k=1}^\infty\frac{ \mathbb E_{\xi}\!\left[G^k(\xi)\right] \Xi^k(r)}{k!(k-\delta)} \frac{\Gamma(m+k)}{\Gamma(m) m^k} \right]_{r = R_0}^{r=\tau}\!.
\end{equation}}
Similarly to what is done for $\eta_S(q|X_0)$ in Appendix~\ref{sec:proofcfS0}, $\eta_I(q|R_0)$ can be developed separately for each gain function, using Proposition~\ref{prop:gain}. The expressions of $\eta_I(q|R_0)$ for the flat-top and multi-cos patterns in Proposition~\ref{prop:cf} are obtained by using the series expansion of the generalized hypergeometric function $_pF_q(\cdot;\cdot)$. For case of the Gaussian pattern is again more complex. Using \eqref{eq:mom_GG} and writing $\eta_I(q|r)$ such that  $\eta_I(q|R_0) = \left[\eta_I(q|r)]\right]_{r = R_0}^{r=\tau}$, we have
    {\footnotesize
    \begin{equation}\label{eta_I_G_proof}
    \begin{split}
        &\eta_I(q|r) = \delta\left(r^2+z^2\right) \sum\limits_{k = 0}^\infty \left(\frac{\Xi(r)g}{m}\right)^k  \!\frac{\Gamma(m+k)}{\Gamma(m) k! (k-\delta)}+r^2+z^2\\
        &+ \sum\limits_{k = 1}^\infty \frac{3}{2}\frac{\Xi^k(r)\delta\left(r^2+z^2\right)\Gamma(m+k)}{\Gamma(m) m^k(k-\delta)k!} \sum\limits_{p = 1}^k \! \binom{k}{p} (N-g)^p g^{k-p} \frac{\erf\left(\tfrac{\pi \sqrt{p \eta}}{3}\right)}{\sqrt{\pi p \eta}}
    \end{split}
    \end{equation}}
The first term in \eqref{eta_I_G_proof} is solved by inversely applying the series expansion of ${}_2F_1(\cdot;\cdot;\cdot)$:
\begin{equation*}
    \sum\limits_{k = 0}^\infty \left(\frac{\chi\,g}{m}\right)^k  \frac{\Gamma(m+k)}{\Gamma(m) k! (k-\delta)} = -\frac{1}{\delta} {}_2F_1(-\delta,m;1-\delta;\frac{\chi g}{m}).
\end{equation*}
Introducing the generalized beta function $B(\cdot;\cdot,\cdot)$, the term of the second line in \eqref{eta_I_G_proof} can be rewritten as
\begin{align}\label{eta_I_G_proof2}
    \begin{split}
        &\sum\limits_{p=1}^\infty  \frac{3\erf\left(\tfrac{\pi \sqrt{p \eta}}{3}\right)}{2 \sqrt{\pi p \eta}} (N-g)^p \sum\limits_{k=p}^\infty \!\frac{z^k}{k!} \frac{\Gamma(m+k) }{\Gamma(m) m^k (k-\delta)} \binom{k}{p} g^{k-p}\\
        &= \sum\limits_{p=1}^\infty  \frac{3\erf\left(\tfrac{\pi \sqrt{p \eta}}{3}\right)}{2 \sqrt{\pi p \eta}} \left(\frac{N}{g}-1\right)^p \frac{\Gamma(p+m)}{\Gamma(m)p!}\\
        &\qquad \times B\left(\frac{gz}{m};p-\delta,-p+1-m\right)\left(\frac{gz}{m}\right)^\delta.
    \end{split}
\end{align}

\section{Proof of Theorem~\ref{th:joint1}}
\label{sec:proofjoint1}
\subsection{General Form of the Metric}
\label{ssec:general_form}
The fading coefficients affecting the links related to the two locations being independent, conditioned on the PP $\Psi$, the joint metric can be decomposed as $\mathcal{J}(T_{c}, T_{e};d) = \mathbb E_{\Psi}\left[\mathbb{P}\left[\textrm{\normalfont SINR}_0 > T_{c}|\Psi\right]\,\mathbb{P}\left[ \mathcal{P}^{IU}(d) < T_{e}|\Psi\right]\right]$.
The two factors in the above product can be developed using the Gil-Pelaez theorem. Using again the assumption of a CF of the interference identical for the AU and the IU, let $\phi_{\text{\normalfont{SINR}}}(q, T_c|\Psi)=\phi_{S}(q; 0 |\Psi)\,\phi_{I}(-q T_{c}|\Psi)\, e^{-j q T_{c} \sigma^2}$ and $\phi_E^{IU}(q;d|\Psi) = \phi_{S}(q; d|\Psi)\,\phi_{I}(q|\Psi)\, e^{-j q T_{e}}$ be respectively the CFs of the signal and interference conditioned on $\Psi$. Using these notations, we get
\begin{equation*}
\begin{split}
    \mathcal{J}(T_{c}, T_{e};d) = \mathbb E_{\Psi}\left[\left(\frac{1}{2}+ \int_0^\infty \textrm{\normalfont Im}\!\left[\phi_{\text{\normalfont{SINR}}}(q, T_c|\Psi)\right] \frac{1}{\pi q} dq\right)\right.\\
    \left.\left(\frac{1}{2}- \int_0^\infty \textrm{\normalfont Im}\!\left[\phi_E^{IU}(q';d|\Psi)\, e^{-j q' T_{e}}\right] \frac{1}{\pi q'} dq\right)\right].
\end{split}
\end{equation*}
By distributing the terms, then, the expectation operator, we get
{\footnotesize
\begin{equation*}
\begin{split}
    \mathcal{J}(T_{c}, T_{e};d) = -\frac{1}{4} + \frac{1}{2}\,\mathbb E_{\Psi}\Big[\underbrace{\frac{1}{2}+ \int_0^\infty \textrm{\normalfont Im}\!\left[\phi_{\text{\normalfont{SINR}}}(q, T_c|\Psi)\right] \frac{1}{\pi q}\, dq}_{F_{\textrm{cov}}(T_{c}|\Psi)}\Big]\\
    +\frac{1}{2}\,\mathbb E_{\Psi}\Big[\underbrace{\frac{1}{2}- \int_0^\infty \textrm{\normalfont Im}\!\left[\phi_E^{IU}(q';d |\Psi)\, e^{-j q' T_{e}}\right] \frac{1}{\pi q'}\, dq'}_{F_{\textrm{EMFE}}^{IU}(T_{e}; d|\Psi)}\Big]\\
    -\frac{1}{\pi^2}\!\underbrace{\mathbb E_{\Psi}\Big[\int_0^\infty \!\textrm{\normalfont Im}\!\left[\phi_{\text{\normalfont{SINR}}}(q, T_c|\Psi)\right] \! \frac{dq}{q} \! \int_0^\infty \!\textrm{\normalfont Im}\!\left[\phi_E^{IU}(q';d |\Psi)\, e^{-j q' T_{e}}\right]\! \frac{dq'}{q'}\Big]}_{\Upsilon(T_{c}, T_{e}; d)}.
\end{split}
\end{equation*}}
By applying the expectation operator, one has $\mathbb E_{\Psi}\left[F_{\textrm{cov}}(T_{c}|\Psi)\right] = F_{\textrm{cov}}(T_{c})$ and $\mathbb E_{\Psi}\left[F_{\textrm{EMFE}}^{IU}(T_{e}; d|\Psi)\right] = F_{\textrm{EMFE}}^{IU}(T_{e}; d)$.
due to the motion-invariance of the H-PPP in $\mathbb{R}^2$, which gives the first line of Theorem~\ref{th:joint1}.

\subsection{Decomposition of $\Upsilon(T_{c}, T_{e}; d)$}
\label{ssec:decomposition}
The expectation over $\Psi$ in the last expression of $\Upsilon(T_{c}, T_{e}; d)$ can be decomposed in the following manner:
\begin{equation*}
    \mathbb E_{\Psi}\left[\cdot\right] \to \mathbb E_{X_0}\left[\mathbb E_{\Psi \backslash \{X_0\}}\left[\cdot\right]\right]
\end{equation*}
where the coordinates of the serving BS $X_0$ are $(R_0, \Theta_0)$. Additionally, we write $\Psi^*=\Psi \backslash \{X_0\}$. Building up on these notations and using the PDF~\eqref{eq:pdf_R0}, we obtain
\begin{equation*}
    \Upsilon(T_{c}, T_{e}; d) = \frac{1}{2\pi}\int_{r_e}^{\tau}\int_0^{2\pi}\Upsilon(T_{c}, T_{e}; d| r_0, \theta_0)\,d\theta \,f_{R_0}(r_0)\,dr_0.
\end{equation*}
where
\begin{equation*}\begin{split}
    &\Upsilon(T_{c}, T_{e}; d| R_0, \Theta_0) \\
    &= \mathbb E_{\Psi^*}\Big[ \int_0^\infty \textrm{\normalfont Im}\Big[\phi_{I}(-q T_{c}|\Psi)\,\phi_{S}(q; 0 |\Psi)\, e^{-j q T_{c} \sigma^2}\Big] \frac{1}{q} dq \Big.\\
    &\qquad \Big.\times \int_0^\infty \textrm{\normalfont Im}\Big[\phi_{I}(q'|\Psi)\,\phi_{S}(q'; d|\Psi)\, e^{-j q' T_{e}}\Big] \frac{1}{q'} dq'\Big]
\end{split}\end{equation*}
Let $v_1\! = \!\phi_{S}(q; 0 |\Psi)\, e^{-j q T_{c} \sigma^2}$ and $v_2 \!=\! \phi_{S}(q'; d|\Psi)\, e^{-j q' T_{e}}$. By swapping the expectation and the integrals, following Fubini's theorem, we obtain
\begin{equation}\begin{split}
    &\Upsilon(T_{c}, T_{e}; d| R_0, \Theta_0) \\
    &= \int_0^{\infty} \!\int_0^{\infty} \!\underbrace{\mathbb E_{\Psi^*}\!\left[\textrm{\normalfont Im}\!\left[\phi_{I}(-q T_{c}|\Psi)v_1\right] \textrm{\normalfont Im}\!\left[\phi_{I}(q'|\Psi)v_2\right]\right]}_{\epsilon(q, q';T_{c},T_{e},d|R_0, \Theta_0)}\!\frac{dq\, dq'}{q \,q'}.
\end{split}\end{equation}
\subsection{Decomposition of $\epsilon(q, q';T_{c},T_{e},d|r_0, \theta_0)$}
\label{ssec:decomposition3}
Since $\phi_{S}$ does not depend on $\Psi^*$, by using $\textrm{\normalfont Im}[x] = \frac{x-\Bar{x}}{2}$ and $\text{Re}[x] = \frac{x+\Bar{x}}{2}$, we obtain
\begin{multline}
    \epsilon(q, q';T_{c},T_{e},d|R_0, \Theta_0) \\
    = \frac{1}{4}\,\mathbb E_{\Psi^*}\!\left[\left(\phi_{I}\left(-q T_{c}| \Psi\right)\,v_1-\Bar{\phi_{I}}\left(-q T_{c}| \Psi\right)\,\Bar{v_1}\right)\right. \\
    \qquad \times \,\left.\left(\phi_{I}\left(q'| \Psi\right)\,v_2-\Bar{\phi_{I}}\left(q'| \Psi\right)\,\Bar{v_2}\right)\right]\\
    = \frac{1}{2}\,\text{Re}\left[\gamma_+(q, q'|r_0)\,v_1\,v_2\right]-\frac{1}{2}\,\text{Re}\left[\gamma_-(q, q'|r_0)\,v_1\,\Bar{v_2}\right]
\end{multline}where we define $
    \gamma_+(q, q'|r_0) = \mathbb E_{\Psi^*}\!\left[\phi_{I}\left(q| \Psi\right)\,\phi_{I}\left(q'| \Psi\right)\right] $
and $\gamma_-(q, q'|r_0) = \mathbb E_{\Psi^*}\!\left[\phi_{I}\left(q| \Psi\right)\,\Bar{\phi_{I}}\left(q'| \Psi\right)\right]$.

\subsection{Decomposition of $\gamma_+$ and $\gamma_-$}
\label{ssec:decomposition4}
The methodology employed to calculate the expressions of $\gamma_+$ and $\gamma_-$ aligns with the approach detailed in Appendix~C of \cite{GontierMeta} for the identical network pertaining to a RU. The only modification required lies in adjusting the lower bound of the integral in equation~(30), substituting $r_0$ for $r_e$.

}

\bibliographystyle{IEEEtran}
\bibliography{bibli}

\vfill

\end{document}